\newtheorem{theorem}{Theorem}
\newtheorem{lemma}{Lemma}
\journal{Journal of Theoretical Biology}
\begin{document}

\begin{frontmatter}

\title{Mathematically Modeling Spillover Dynamics of Emerging Zoonoses with Intermediate Hosts}

\author[DartMath]{Katherine P. Royce}
\ead{Katherine.P.Royce.19@dartmouth.edu}
\author[DartMath,DartMed]{Feng Fu\corref{ff}}
\ead{feng.fu@dartmouth.edu}

\address[DartMath]{Department of Mathematics, Dartmouth College, Hanover, NH 03755, USA}

\address[DartMed]{Department of Biomedical Data Science, Geisel School of Medicine at Dartmouth, Lebanon, NH 03756, USA}

\cortext[ff]{Corresponding author at: 6188 Kemeny Hall, 27 N. Main Street, Hanover, NH 03755, USA. Fax:  +1 (603) 646 1312}

\begin{abstract}
The World Health Organization describes zoonotic diseases as a major pandemic threat, and modeling the behavior of such diseases is a key component of their control. Many emerging zoonoses, such as SARS, Nipah, and Hendra, mutated from their wild type while circulating in an intermediate host population, usually a domestic species, to become more transmissible among humans, and moreover, this transmission route will only become more likely as agriculture and trade intensifies around the world. Passage through an intermediate host enables many otherwise rare diseases to become better adapted to humans, and so understanding this process with mathematical epidemiological models is necessary to prevent epidemics of emerging zoonoses, guide policy interventions in public health, and predict the behavior of an epidemic. In this paper, we account for spillovers of a zoonotic disease mutating in an intermediate host by means of modeling transmission dynamics within and between three host species, namely, wild reservoir, intermediate domestic animals, and humans. We calculate the basic reproductive number of the pathogen, present critical conditions for the emergence dynamics of zoonosis, and perform stability analysis of admissible disease equilibria. Our analytical results agree well with long-term simulations of the system. We find that in the presence of biologically realistic interspecies transmission parameters, a zoonotic disease can establish itself in humans even if it fails to persist in its reservoir and intermediate host species. Our model and results can be used to understand the dynamic behavior of any zoonosis with intermediate hosts and assist efforts to protect public health.
\end{abstract}

\begin{keyword}
Zoonosis \sep Evolutionary Epidemiology \sep Pathogen Adaptation \sep Global Health \sep Mathematical Biology
\end{keyword}

\end{frontmatter}



\section{Introduction} \label{ch:introduction}

Zoonotic diseases, which originate in animals and infect humans, are one of the most concerning epidemic threats of the 21$^{\textrm{st}}$ century and form 60\% of all known infectious diseases \citep{karesh2012ecology}. These pathogens cause a billion cases of illness per year, inflict severe economic damage, and pose an increasing threat in a more connected world; indeed, endemic zoonoses are currently the greatest global burden on human health \citep{karesh2012ecology}. Public health threats such as HIV-AIDS, avian influenza, SARS, Ebola, Nipah, Hendra, and rabies all trace their origin to nonhuman reservoir species, and it is likely that the next global pandemic will be a zoonosis \citep{karesh2012ecology}. Zoonoses have comprised a growing area of public health research for the last two decades \citep{daszak2000emerging}, and the World Health Organization even cites ``Disease X'', a pathogen currently unknown to cause human disease that might evolve to become more transmissible among humans, as a priority for research and development in pandemic prevention \citep{who2018}.

The frequency of new pathogens emerging into the human population$-$rapidly increasing in incidence or geographic range to become a threat to public health$-$is increasing \citep{morse2012prediction}, and zoonoses comprise 75\% of emerging infectious diseases \citep{woolhouse2005host}. Emergence of zoonoses is linked to human behavioral changes and increasing rates of interaction with wildlife, human travel, and global trade \citep{cunningham2017one}, as well as accelerating climate change (\citep{cunningham2017one}, \citep{lloyd2015nine}, \citep{wolfe2005bushmeat}). The dynamics of a zoonosis in its reservoir host are frequently cited as an influence on its emergence in humans \citep{karesh2012ecology}, but to our knowledge, no attempt has been made to quantify the entire course of an emerging zoonosis, from its origins in a wild reservoir host to an epidemic in humans. Indeed, \cite{lloyd2009epidemic} blames a desire to view zoonoses in a piecewise manner, as a concatenation of different epidemics rather than a connected system, for the lack of quantitative understanding of zoonoses as a new type of disease. Some of the most pressing unaddressed questions in establishing the mathematical theory of zoonoses include better capturing disease dynamics within nonhuman species in order to characterize changes in the disease before it infects humans; focusing on the first cases of human infection to understand how a pathogen actively adapts to humans; and developing a theory for the role of intermediate hosts in the emergence of the disease \citep{lloyd2015nine}. Despite these recognized challenges and the frequent use of mathematical biology to assist with risk assessment and surveillance strategies for other types of diseases, there is no unifying mathematical theory or set of principles that can be used to frame discussions of zoonotic spillovers \citep{lloyd2015nine}. This gap in modeling spillover dynamics limits our understanding of zoonoses, as does a general lack of mathematical modeling of multihost pathogens and quantification of the rate of human-to-human transmission (\citep{lloyd2009epidemic}, \citep{allen2012mathematical}). This paper provides such a mathematical model for a zoonosis emerging through an intermediate host.

Zoonotic diseases are currently classified on the basis of their human-to-human transmissibility \citep{lloyd2009epidemic}, which is assumed to be a critical distinction between pathogens with pandemic potential and pathogens that remain relatively rare (\citep{karesh2012ecology}, \citep{woolhouse2005host}, \citep{lloyd2015nine}). The major distinction in zoonotic spread within humans is whether the pathogen can spread beyond its primary individual host to infect other humans: whether the basic reproduction number $R_0$, the number of secondary cases produced by an index case in an entirely naive population, is greater than 1 \citep{lloyd2015nine}. This classification rests on a three-stage framework summarized by \cite{morse2012prediction}, \cite{lloyd2009epidemic}, and \cite{wolfe2005bushmeat}. Stage 1, pre-emergence, represents zoonoses circulating in an intermediate host but only capable of spillover into a dead-end human host, with no further transmission. Stage 2, localized emergence, defines diseases that can maintain stuttering chains in a human population with reinfection from animal hosts but are incapable of sustaining themselves in humans alone. Stage 3, pandemic emergence, classifies diseases that are fully adapted to humans and thus capable of causing outbreaks in our species alone (\citep{morse2012prediction}, \citep{lloyd2009epidemic}). In this paper, we examine the process of pathogen evolution through these different stages to show that with a mutation to a human-transmissible strain in an intermediate host, a pathogen can maintain an endemic equilibrium in humans even in stage 1, suggesting that the epidemiological stratification of zoonotic diseases based on their perceived threat to humans may be myopic.

\subsection{The Role of Intermediate Hosts for a Zoonosis}

In contrast to pathogens which evolved to infect humans, such as smallpox, the biology of emerging zoonoses is adapted to a different host species, called the reservoir host. Zoonoses are the product of a pathogen exploiting a new niche, sometimes one exposed by anthropogenic changes or induced by the amplification of its transmission \citep{karesh2012ecology}. Zoonotic pandemics occur when the pathogen gains the ability to circulate in a human population, rather than infrequently causing infection in an individual dead-end host \citep{richard2014avian}, a change which usually requires one or more mutations from the wild type \citep{lloyd2015nine}. While the change to a pathogen's $R_0$ in humans can take place over a single individual infection, this modification is considered to be a result of the role that different animal hosts play in amplifying or transmitting a zoonosis to humans \citep{karesh2012ecology}. Since a pathogen's transmissibility can also be affected by anthropogenic factors such as the host species' population structure or resource and habitat availability \citep{allen2012mathematical},
intermediate hosts$-$a non-reservoir animal species in which a zoonotic pathogen circulates$-$particularly domestic animals, provide greater opportunity for a pathogen to mutate to a human-transmissible form, because these species are biologically similar to the pathogen's wild reservoir and have greater contact with humans. It is therefore extremely important to develop a theory for a human-transmissible disease arising from a zoonotic pathogen in an intermediate host population; with such a framework, policymakers can move towards prevention of a human pandemic rather than amelioration of one \citep{lloyd2015nine}. 

As an example of the role of intermediate hosts, the adaptation of avian influenza, one of the most well-studied zoonoses, to humans requires a mutation in domestic pigs or poultry. Avian influenza's success in a new host species is governed by its receptor binding specificity \citep{richard2014avian}; with circulation in domestic pigs, which express  both human- and avian-influenza type receptors in their tracheae, the virus has an opportunity to mutate to a form that can infect humans (\citep{neumann2009emergence}, \citep{ma2009role}). Further, as domestic animals, swine have more contact with humans than wild birds do and can thus spread a disease more quickly \citep{ma2009role}. Domestic poultry can play a similar role for the disease, since circulation in a domestic poultry population may increase the pathogenicity of avian influenza among birds (\citep{vandegrift2010ecology}, \citep{ito2001generation}). As a result, human movement of livestock, not avian migration, is the dominant factor in the spread of highly pathogenic avian influenza, even though wild birds are the reservoir of the disease \citep{gauthier2007recent}. The influenzas are perhaps the easiest example to understand, as reassortment of different hemagglutinin and neuraminidase subtypes within one infected pig can produce entirely new pathogens \citep{neumann2009emergence}, but less drastic mutations can alter the transmissibility or lethality of any zoonosis. Pigs are an intermediate host for Nipah virus \citep{sharma2019emerging}, and the intensification of the pig industry in Malaysia was identified as the key factor in the spillover of the disease to humans in the 1990s \citep{cunningham2017one}. In this case, the disease dynamics that resulted from repeated introductions from bats, the pathogen's reservoir host, to pigs enabled Nipah to persist in its intermediate host and thus infect humans \citep{pulliam2011agricultural}. These examples support the general principle that the domestication of animals is linked to an increased risk of emergence of zoonotic diseases into the human population \citep{karesh2012ecology}, and Table~\ref{tab:zoonoses}, a sampling of zoonoses for which an intermediate host has been identified, shows the prevalence of domesticated species as intermediate hosts.

\begin{sidewaystable}[ht]
\centering
    \caption{Zoonotic diseases with intermediate hosts}
      \begin{small}
 \begin{tabular}{| l | c | c | l |}
    \hline
    Disease & Reservoir Host & Intermediate Host & Source\\
    \hline
    Nipah virus encephalitis & bats & pigs & \cite{karesh2012ecology}, \cite{cunningham2017one}, \cite{lloyd2015nine}, \cite{daszak2000emerging}\\
    Hendra virus disease & bats & horses & \cite{cunningham2017one}, \cite{lloyd2015nine}, \cite{daszak2000emerging}\\
    SARS & bats & civets & \cite{lloyd2015nine}\\
    Avian influenza & wild birds & domestic poultry, pigs &  \cite{vandegrift2010ecology}, \cite{ito2001generation}, \cite{iwami2007avian}\\
    Menangle virus disease & bats & pigs & \cite{cunningham2017one}, \cite{daszak2000emerging}\\
    Middle East Respiratory Syndrome & bats & camels & \cite{de2013mers} \\
    Campylobacteriosis & wild birds & domestic poultry & \cite{goodwin2012interdisciplinary}\\
    Japanese encephalitis & wild birds & pigs & 
    \cite{goodwin2012interdisciplinary}\\
    \hline
    \end{tabular}
   \end{small}
        \label{tab:zoonoses}
\end{sidewaystable}

%
%

\subsection{The Role of Mathematical Modeling}

Ordinary differential equations describing population dynamics and zoonotic transmission are a crucial tool in understanding the nonlinear interactions that are a hallmark of zoonotic diseases, a type of subgroup dynamics which can lead to counterintuitive behaviors (\citep{lloyd2009epidemic}, \citep{allen2012mathematical}). Mathematical models can enable experiments that would be unfeasible with real populations, predict future trends based on current data, and estimate key epidemic qualities such as the basic reproduction number of a pathogen in a specific population \citep{lloyd2009epidemic}. Explicitly quantifying the dynamics behind this adaptive transformation is thus critical to public health efforts; however, no previous models examine the changes an emerging zoonosis undergoes as it spreads between different species \citep{lloyd2015nine}.

There have been attempts to quantify the effect of pathogen mutations in humans alone. Models for tuberculosis sometimes include a distinction between  latent and active forms of the disease \citep{gumel2012causes}. \cite{iwami2007avian} recognizes that the ability of avian influenza to mutate during an epidemic is a crucial determinant of its pandemic potential, but conceptualizes this mutation as occuring within humans rather than another species, ignoring the intrinsically zoonotic behavior of the disease. \cite{gumel2009global} expands on this analysis by including a compartment for wild birds, but still locates the mutation after the pathogen's spillover to humans. This framework occludes the key population in the spread of a zoonosis: \cite{richard2014avian} cites two barriers, jumping to humans and efficient human-to-human transmission, that a zoonotic pathogen must overcome, and this change frequently occurs in the ``mixing vessel'' of an intermediate host species \citep{neumann2009emergence}. Further, controlling a human epidemic of a zoonotic disease depends on controlling the basic reproduction number in both animals and humans \citep{kim2010avian}, interventions not previously studied together. 

In this paper, we present a model which incorporates a pathogen mutation to a human-transmissible form in an intermediate host species, filling the gap noted by \cite{lloyd2015nine} with the introduction of a mathematical model that simulates the entire course of an emerging zoonosis. We investigate whether the presence of pathogen adaptation in intermediate hosts can amplify an epidemic among humans, with the goal of informing public health efforts to curb emerging infectious diseases.

The model presented here is based on the basic SIR model first presented by \cite{kermack1927contribution}, as well as the introduction to multihost SIR models presented by \cite{allen2012mathematical}. As a baseline and example, we use parameters that most closely reflect highly pathenogenic avian influenza, a classical example of a zoonosis with an intermediate host \citep{daszak2000emerging} and one for which the most data is available. Further, although we model a traditional epidemic in the wild reservoir host, the model retains the capacity to implement seasonal variation or a constant force of infection in that species by changing the equations describing the wild compartment. However, our model is intended to codify the idea of an intermediate host mathematically and therefore does not focus on a particular infectious disease. By changing its parameters, this model can be applied to study any zoonosis that passes through an intermediate host population, and its results are general to that theory. 

\subsection{Outline}

This paper investigates two questions: how to model adaptation of a zoonotic pathogen to a human-transmissible form in an intermediate host population and what effects these interspecies dynamics have on the epidemic in humans. Section 2 introduces the model; Section 3 analyzes its mathematical qualities, including its equilibria and $R_0$; Section 4 provides numerical simulations; and Section 5 suggests directions for future research. We find that completely accounting for the spillover and interpopulation dynamics exhibited by emerging zoonoses links human populations to animal ones more deeply than previously thought. With nonzero contact rates between species and a nonzero mutation rate in an intermediate host, a zoonotic pathogen can establish itself in humans even if it fails to take hold in animal hosts or achieve an $R_0 > 1$ in the human compartment, refuting the transmissibility framework (\citep{morse2012prediction}, \citep{wolfe2005bushmeat}, and \citep{lloyd2009epidemic}) that currently forms the basis for classification of emerging zoonoses. This paper introduces a theory of spillover through an intermediate host species that can be modified to study any zoonosis that exhibits this behavior, and sounds an alarm for researchers and policymakers by showing that zoonotic epidemics can persist in human populations under less stringent conditions than previously assumed.

\section{The Model} \label{ch:model}

The traditional susceptible, infected, recovered (SIR) model originally developed by Kermack and McKendrick (\citep{kermack1927contribution}, \citep{kermack1932contributions}, \citep{kermack1933contributions}), shown below, accurately reproduces the standard epidemic curves for an infectious disease, and forms the basis for many different epidemiological models.

\begin{align*}
\frac{dS}{dt} &= -\beta S I, \\
\frac{dI}{dt} &= \beta S I - \gamma I, \\
\frac{dR}{dt} &= \gamma I.
\end{align*}

Here $S$, $I$, and $R$ stand for the proportion of individuals that are susceptible, infected, and recovered, respectively. This deterministic framework depends on the transmission rate $\beta$ and the recovery rate $\gamma$, parameters specific to the disease. This, the simplest version, considers one disease that confers lifelong immunity spreading within a closed, constant population of one species. There are no equilibria other than the disease-free state, as there is no external force of infection or influx of susceptible individuals. The basic reproductive number $R_0$ for this simple SIR model is $$R_0 = \beta/\gamma.$$ 

To model a disease over a longer time frame, vital dynamics modeling birth ($b$) and mortality ($m$) rates are introduced: 
\begin{align*}
dS/dt &= b -\beta S I - m S,\\
dI/dt &= \beta S I - \gamma I - m I,\\
dR/dt &= \gamma I - m R.
\end{align*}

This framework has a force $b$ constantly introducing new susceptible individuals, and the basic reproductive number for this model is
$$ R_0 =  \frac{b\beta}{m(m+\gamma)}.$$ 
The disease-free equilibrium is 
$$(S^*, I^*, R^*) = (\frac{b}{m},\, 0,\, 0),$$ 
and the endemic equilibrium is 
$$(S^*, I^*, R^*) = (\frac{m+\gamma}{\beta},\, \frac{m}{\beta}(R_0-1),\, \frac{\gamma}{\beta}(R_0-1)).
$$

To our knowledge, the main class of SIR models that include two or more species are those that consider vector-borne illnesses. However, since a vector-borne disease must infect both its host species (rather than opportunistically jumping to a new species) and follows set steps in its life cycle in both (rather than unpredictably mutating in a new host), a vector-borne SIR model merely adds more compartments for the pathogen to run through. Unlike vector-borne diseases such as malaria (see \cite{florens2002proteomic} and \cite{chang2013malaria}), an emerging zoonosis does not need to infect another species as part of its life cycle. Instead, it opportunistically infects animals similar enough to its reservoir host, and$-$in the pattern of transmission considered here$-$mutates to a human-to-human transmissible form only if given the opportunity. Dengue, which spreads between mosquitoes and humans, is another example of a vector-borne disease, and its analysis draws useful parallels with the type of pathogen behavior modeled here. \cite{andraud2012dynamic}, a review paper of deterministic models of dengue, notes that the disease dynamics among the vector population are frequently simplified to a mere force of infection for the human one, since the disease does not evolve within the vector species. In contrast, a zoonosis model must consider the disease dynamics in its nonhuman compartments, since these dynamics determine whether the pathogen reaches humans at all. Attempts have been made to model zoonotic spillovers (\cite{lloyd2009epidemic}, \cite{allen2012mathematical}, \cite{hussaini2017mathematical}), but without incorporating changes in the pathogen's ecology over the course of an epidemic, these models are mathematically indistinguishable from those modeling a vector-borne disease with more hosts or a multispecies model. While a sizeable literature exists on mathematical models of vectorborne diseases, and this class of pathogen provides a useful comparison for the type of behavior modeled here, no model captures the unintentional opportunism of zoonoses or incorporates selective pressure on viruses \citep{allen2012mathematical}.

 To model this behavior, we create three compartments, representing the pathogen's wild reservoir host, an intermediate host assumed to be a domestic animal, and humans. The wild, domestic and human populations are each modeled by a SIR system with vital dynamics and linked by transmission routes. An infected wild host can pass the disease to a susceptible domestic animal at a transmission rate $p_d$, and an infected domestic animal can pass the human-transmissible strain of the disease to a human at a rate $p_h$. Finally, the model incorporates the hallmark of an emerging zoonosis: the pathogen's ability to mutate to a human-transmissible strain while circulating in a domestic host. To model this phenomenon, we introduce a category $T$ (transmissible) for domestic animals in which the zoonosis has mutated to a human-transmissible form. This mutation happens at a rate $\mu$ in infected domestic animals, who then transition from the original infected category to the transmissible one and can infect other susceptible domestic animals with the new, human-transmissible strain. The full system of 10 ordinary differential equations is shown in Table~\ref{tab:equations}, with subscripts indicating the species (wild, domestic, or human) to which the compartment belongs. Figure~\ref{fig:schematic} provides a representation of the connections between populations, and Table~\ref{tab:params} gives the definition of each variable.

\begin{figure}[!ht]
\centering
\includegraphics[scale=0.4]{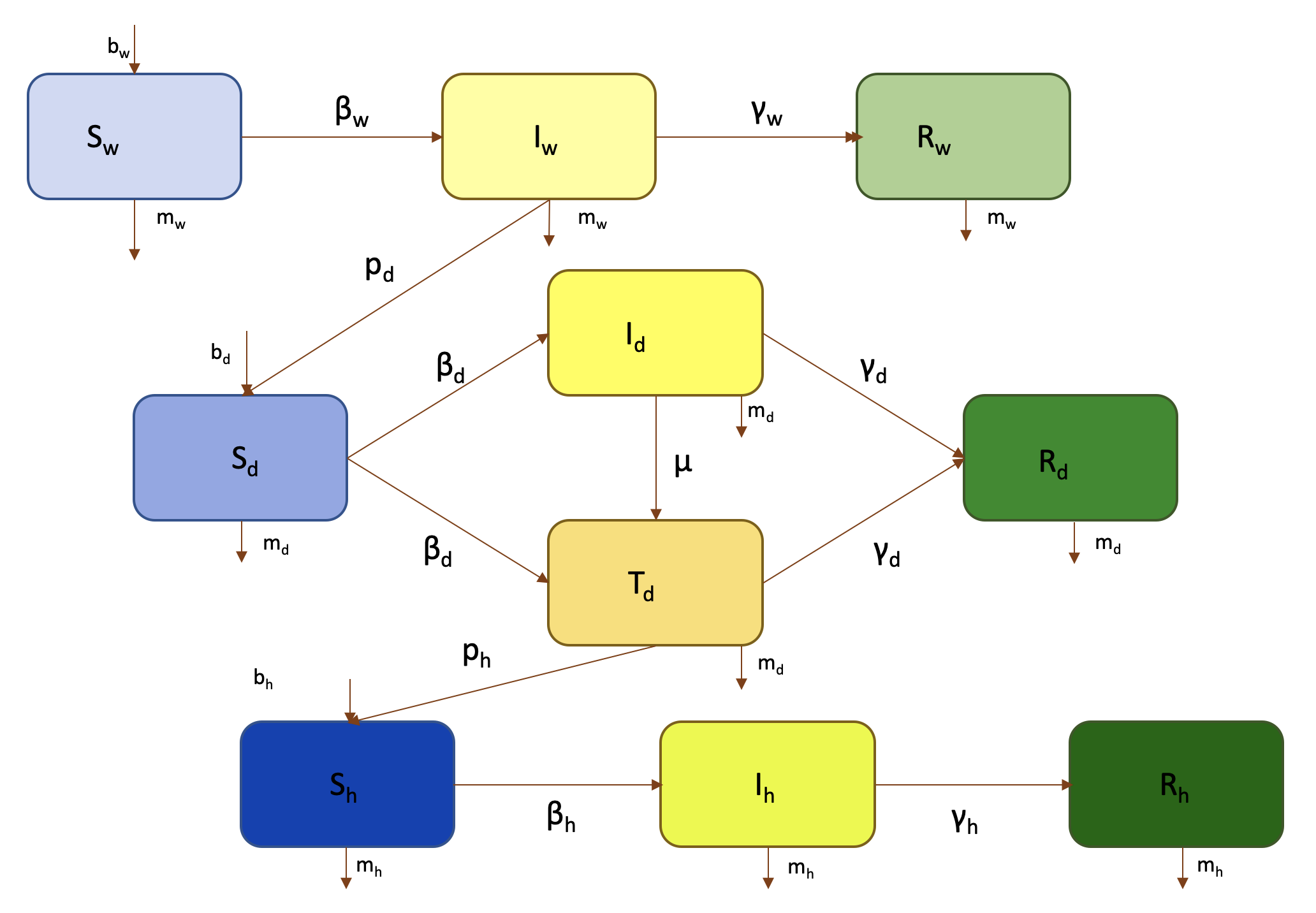}
\caption{A representation of the model. Model parameters are summarized 
in Table~\ref{tab:params}.}
\label{fig:schematic}
\end{figure}

\begin{table}[!ht]
    \centering
    \begin{tabular}{|c|l|}
    \hline
   \multirow{3}{*}{Wild} 
    & $dS_w/dt = b_w - \beta_w S_w I_w - m_w S_w$ \\ 
    & $dI_w/dt = \beta_w S_w I_w - \gamma_w I_w -m_w I_w$ \\
    & $dR_w/dt = \gamma_w I_w - m_w R_w$  \\
   
    \hline
      \multirow{4}{*}{Domestic} 
    & $dS_d/dt = b_d - \beta_d S_d I_d -  p_d S_d I_w - \beta_d S_d T_d - m_d S_d$ \\
   &  $dI_d/dt = \beta_d S_d I_d + p_d S_d I_w - \mu I_d- \gamma_d I_d -m_d I_d$  \\
   &  $dT_d/dt = \mu I_d + \beta_d S_d T_d - \gamma_d T_d -m_d T_d$  \\
   & $dR_d/dt = \gamma_d I_d + \gamma_d T_d -m_d R_d$ \\
    \hline
      \multirow{3}{*}{Humans} & 
     $dS_h/dt = b_h - \beta_h S_h I_h - p_h S_h T_d - m_h S_h$  \\
   &  $dI_h/dt = \beta_h S_h I_h + p_h S_h T_d - \gamma_h I_h - m_h I_h$  \\
   & $dR_h/dt = \gamma_h I_h - m_h R_h$ \\
    \hline
    \end{tabular}
    \caption{ODE systems of our model with three host compartments (species), composed of wild reservoir hosts, intermediate domestic animal hosts, and human hosts.   }
    \label{tab:equations}
\end{table}

\begin{table}[!ht]
    \centering
    \begin{tabular}{| c  l |}
    \hline
   
    $S_i$ & susceptible individuals of species $i$\\
    $I_i$ & infected individuals of species $i$\\
    $T_d$ & intermediate hosts infected with human-transmissible strain\\
    $R_i$ & recovered individuals of species $i$\\
    
    $\beta_i$ & transmission rate among species $i$\\
    $\gamma_i$ & recovery rate among species $i$\\
    $b_i$ & birth rate among species $i$\\
    $m_i$ & natural mortality rate among species $i$\\
    
    $p_d$ & transmission rate from reservoir to intermediate hosts\\
    $p_h$ & transmission rate from intermediate hosts to humans \\
    $\mu$ & mutation rate of the pathogen in the intermediate host population\\
    
    \hline
    \end{tabular}
    \caption{Parameter definitions.}
    \label{tab:params}
\end{table}

\subsection{Model Assumptions}

We make several assumptions to clarify the essential dynamics of the system. Firstly, we equate the domestic animal recovery and transmission rates for both strains of the pathogen; the human-transmissible strain is different from the wild one only in that its transmission rate in humans is nonzero. We further assume that the population of each compartment is constant over the course of the simulation, with each species' vital dynamics set at replacement rates, and thus calculate the proportion of susceptible, infected, and recovered animals in each species rather than the raw numbers present in each category. To maintain a focus on population biology and the potential for the spread of disease from infected individuals, we do not consider disease-induced mortality. Finally, only domestic animals infected with the $T$ strain can pass the disease to humans, although both strains circulate in the domestic population. The model does not account for coinfection in a domestic animal, since an individual infected with both strains is still capable of starting a human epidemic and is thus counted in the $T$ category.

We intend this model to provide a general framework that can be modified to fit any zoonosis with an intermediate host, and provide the analysis in Section 3 with the goal of starting such a theory. However, to provide a baseline for the numerical simulations in Section 4, we use parameters corresponding to highly pathenogenic avian influenza. Table~\ref{tab:paramvals} provides the baseline values and the sources used in our examples. 

\medskip
\begin{table}[!ht]
    \centering
    \begin{tabular}{|c | c | c|}
\hline
Parameter & Value & Source \\
\hline
    initial $S_w$ & 0.5 & \cite{singh2018assessing}\\
    initial $I_w$ & 0.5 & \cite{singh2018assessing}\\
    $p_d$ & 0.51 & \cite{singh2018assessing} figure 1\\
    $\beta_d$   & 0.89 & \cite{henaux2010model} table 1 for wild birds \\
    $\gamma_d$ & 0.981 & \cite{henaux2010model} table 1 \\
    $b_d$ & 1  & assumed \\
    $m_d$ & 1 & assumed \\
    $p_h$ & 0.207 & \cite{xiao2014transmission} \\
    $\beta_h$ & 0.078 & \cite{xiao2014transmission} \\
    $\gamma_h$ & 0.091 & \cite{xiao2014transmission} \\
    $b_h$ & 0.0118 & \href{https://www.cdc.gov/nchs/fastats/births.htm}{CDC} \\
    $m_h$ & 0.009 & \href{https://www.cdc.gov/nchs/nvss/deaths.htm}{CDC} \\
    $\mu$ & 0.499 & \cite{singh2018assessing} figure 3\\
\hline
\end{tabular}
    \caption{Parameter values and sources for the model. Due to a lack of data for transmission parameters in wild animals, we assume $\beta_w, \gamma_w, b_w$, and $m_w$ to be equivalent to their counterparts in domestic animals.}
    \label{tab:paramvals}
\end{table}

Reflecting the lack of data for zoonoses over their entire range of species, the sources used for these parameters reflect different strains of avian influenza. \cite{bett2014transmission} calculates the transmission rate of H5N1 in Nigeria, while \cite{xiao2014transmission} cites information about H7N9 in China. The values are also attained using different data-gathering practices: \cite{singh2018assessing} surveyed experts in Australian avian influenza for their assessment of the probability of domestic poultry becoming infected with low pathogenic avian influenza from wild birds, as well as that strain mutating to higly pathenogic avian influenza (HPAI) on a farm, while \cite{henaux2010model} provides a review of some HPAI parameters. As stated in Table \ref{tab:paramvals}, we could not find a source for transmission parameters among wild birds, and thus assume the disease parameters in that species to be equivalent to those in domestic poultry. The variety and inconsistency of these sources reflects the need for more data and research into the actual effects of particular zoonoses. Although it is crucial for public health interventions based on a mathematical model to know the accuracy of each parameter, their specific values are relatively unimportant for the theoretical results presented here, as the global analysis of the system holds for all parameter values, and are accordingly not the focus of this work.

\subsection{Methods}

To obtain the equilibria for the system, we set each of the 10 equations of Table~\ref{tab:equations} to 0 and solve for the population variables. We further analyze the stability of each equilibrium using the system's Jacobian about the point and establish the importance of the model's basic reproduction number as a threshold condition. The methods we use to analyze the model's $R_0$ are based on the next-generation matrix technique given by \cite{diekmann2009construction} and \cite{van2002reproduction}. This method defines $R_0$ in a compartmental model, where it has been proven to remain a threshold condition for the stability of equilibria \citep{van2002reproduction}. This approach is similar to that used to model the spread of avian influenza in farm and market populations of domestic poultry \citep{li2018dynamics}; to analyze the effect of different growth laws in the avian population on the spread of avian influenza \citep{liu2017nonlinear}; to give a model of a vector-host system for leishmaniasis \citep{hussaini2017mathematical}; to analyze SEIR models \citep{khan2018complex}; and to analyze models with vaccination \citep{anguelov2014backward}. Our work thus uses established mathematical epidemiology techniques to analyze a new model of infectious disease dynamics, extending the preexisting SIR framework to study the spillover effect of a zoonosis with an intermediate host. The model's key innovations are linking three species together based on their proximity to humans and distinguishing between human-transmissible and non-human-transmissible strains of the pathogen, as no previous models simulate either intermediate hosts for zoonoses or a mutation to a human-transmissible form during the course of the epidemic in animals to study the entire range of an emerging infectious zoonosis. 

\section{Analysis} \label{ch:analysis}

In this section, we analyze the mathematical qualities of the model, proving that a unique endemic equilibrium exists by analyzing each species compartment. We further show that the stability of each equilibrium depends on the system's $R_0$ and distinguish between the importance of intraspecies parameters$-$the transmission ($\beta)$ and recovery ($\gamma$) rates of a species, as well as its birth and mortality rate ($b,m$)$-$and interspecies parameters governing connections between species $-$the contact rates $p_d$ and $p_h$, as well as the rate of mutation $\mu$ to a human-transmissible form. We show that, if there is a nonzero number of infected wild animals, only the second type of parameters can alter the global stability of the system.

\subsection{The Wild Compartment}

The equilibrium states $(S_w^*, I_w^*, R_w^*)$ in the wild compartment satisfy the following equations:
 \begin{eqnarray}
     b_w - \beta_w S_w^* I_w^* - m_w S_w^* &= 0,\label{sirw1} \\
    \beta_w S_w^* I_w^* - \gamma_w I_w^* -m_w I_w^* &= 0,\label{sirw2} \\
    \gamma_w I_w^* - m_w R_w^* &= 0\label{sirw3}.
\end{eqnarray}
By summing \eqref{sirw1}, \eqref{sirw2}, and \eqref{sirw3}, we obtain the total abundance of wild animals in equilibrium, $b_w/m_w$.

\begin{theorem}
    There is one disease-free equilibrium, $E_f^w$, at 
    $$(S_w^*, I_w^*, R_w^*) = \left(\frac{b_w}{m_w}, 0, 0\right),$$ 
    and a unique endemic equilibrium, $E_e^w$, at 
$$(S_w^*, I_w^*, R_w^*) = \left(\frac{m_w+\gamma_w}{\beta_w}, \frac{b_w-m_wS_w^*}{\beta_wS_w^*}, \frac{\gamma_wI_w^*}{m_w}\right)$$
\end{theorem}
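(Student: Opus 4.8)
The plan is to solve the three steady-state equations \eqref{sirw1}--\eqref{sirw3} directly, exploiting the fact that the wild compartment is a self-contained SIR model with vital dynamics and so decouples completely from the domestic and human equations. The key structural observation is that the infected equation \eqref{sirw2} factors: rewriting it as $I_w^*(\beta_w S_w^* - \gamma_w - m_w) = 0$ immediately splits the analysis into the two mutually exclusive cases $I_w^* = 0$ and $\beta_w S_w^* = \gamma_w + m_w$. This dichotomy is what forces there to be at most two equilibria, and it is the backbone of the whole argument.

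First I would handle the disease-free case $I_w^* = 0$. Substituting into \eqref{sirw1} gives $b_w - m_w S_w^* = 0$, hence $S_w^* = b_w/m_w$, and \eqref{sirw3} then forces $R_w^* = 0$. This recovers $E_f^w = (b_w/m_w,\,0,\,0)$ and shows it is the only equilibrium without infection. Next I would treat the endemic case $\beta_w S_w^* = \gamma_w + m_w$, which pins $S_w^* = (m_w+\gamma_w)/\beta_w$ independently of the other variables. Feeding this value back into \eqref{sirw1} and solving the resulting linear equation for $I_w^*$ yields $I_w^* = (b_w - m_w S_w^*)/(\beta_w S_w^*)$, and \eqref{sirw3} gives $R_w^* = \gamma_w I_w^*/m_w$. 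Because each of these three quantities is determined successively and uniquely once the case is fixed, the endemic equilibrium $E_e^w$ is unique, establishing the claim.

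The algebra here is elementary, so I do not expect a genuine obstacle; the step requiring the most care is instead the \emph{feasibility} of the endemic branch, which I would flag explicitly. The formula for $I_w^*$ is always an algebraic solution, but it lies in the biologically meaningful nonnegative region only when $b_w - m_w S_w^* > 0$, equivalently when $b_w\beta_w > m_w(m_w+\gamma_w)$, i.e. when the wild basic reproduction number $R_0^w = b_w\beta_w/(m_w(m_w+\gamma_w))$ exceeds $1$. I would note that at the threshold $R_0^w = 1$ the two equilibria coincide, since the $S$-components $b_w/m_w$ and $(m_w+\gamma_w)/\beta_w$ become equal and $I_w^*$ vanishes. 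This transcritical structure foreshadows the stability analysis promised later in the section and mirrors the classical single-species SIR result quoted in Section 2.
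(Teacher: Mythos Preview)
Your proposal is correct and follows essentially the same route as the paper: factor \eqref{sirw2} as $I_w^*(\beta_w S_w^* - \gamma_w - m_w)=0$, then handle the two cases by back-substitution into \eqref{sirw1} and \eqref{sirw3}. The feasibility remark and transcritical observation you add are sound extras that the paper omits from this proof and instead treats implicitly in the subsequent stability theorem.
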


\begin{proof}
    Factoring equation \eqref{sirw2} yields 
    $$I_w^*(\beta_wS_w^*-\gamma_w-m_w) = 0,$$
    which holds either if  $I_w^* = 0$ (case 1) or if $\beta_wS_w^*-\gamma_w-m_w = 0$ (case 2). 

    In the first case, we obtain the disease-free equilibrium by substituting $I_w^* = 0$ into equations \eqref{sirw1} and \eqref{sirw3}, producing equilibrium values of $S_w^* = \frac{b_w}{m_w}$ and $R_w^* = 0$.
    
    The second case holds if $S_w = \frac{\gamma_w+m_w}{\beta_w}$. Substituting this value into equation \eqref{sirw1}, we obtain $I_w^* = \frac{b_w-m_wS_w^*}{\beta_wS_w^*}$. Solving equation \eqref{sirw3} for $R_w$ gives $R_w^* - \frac{\gamma_wI_w^*}{m_w}$. Since $I_w^* > 0$, this case produces an endemic equilibrium.
\end{proof}

It is a basic epidemiological result that a simple SIR model with vital dynamics, such as the system that models the wild compartment here, has $R_0^w = \frac{b_w\beta_w}{m_w(\gamma_w+m_w)}$. We prove the threshold value of $R_0^w$ in the wild compartment by using its Jacobian,

$$J_w = \begin{bmatrix}
-\beta_wI_w-m_w & -\beta_wS_w & 0 \\
\beta_wI_w & \beta_wS_w-\gamma_w-m_w & 0 \\
0 & \gamma_w & -m_w \\
\end{bmatrix}$$

\begin{theorem}
    $E_f^w$ is stable if $R_0 < 1$ and $E_e^f$ is stable if $R_0 > 1$.
\end{theorem}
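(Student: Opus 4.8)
The plan is to establish local asymptotic stability of each equilibrium by linearizing the wild subsystem about the point in question and examining the spectrum of the Jacobian $J_w$ given above, showing in each case that the sign condition on the eigenvalues coincides with the stated threshold on $R_0^w = \frac{b_w\beta_w}{m_w(\gamma_w+m_w)}$. The key structural observation is that the third column of $J_w$ vanishes except for the diagonal entry $-m_w$, so the characteristic polynomial factors as $(\lambda + m_w)$ times the characteristic polynomial of the upper-left $2\times 2$ block. Since $-m_w < 0$ always contributes a stable mode, stability is governed entirely by that $2\times 2$ block, which I denote $A$.

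First I would treat the disease-free equilibrium $E_f^w$, where $I_w^* = 0$ and $S_w^* = b_w/m_w$. Substituting these values makes $A$ upper triangular, so its eigenvalues can be read directly off the diagonal: they are $-m_w$ and $\beta_w b_w/m_w - \gamma_w - m_w$. The former is negative, and the latter is negative precisely when $\frac{b_w\beta_w}{m_w(\gamma_w+m_w)} < 1$, i.e. when $R_0^w < 1$. This immediately yields stability of $E_f^w$ in that regime.

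For the endemic equilibrium $E_e^w$ I would exploit the defining relation $\beta_w S_w^* = \gamma_w + m_w$ (the case-2 condition from the previous theorem), which makes the $(2,2)$ entry of $A$ vanish. Rather than compute eigenvalues explicitly, I would invoke the Routh--Hurwitz criterion for a $2\times 2$ matrix: both eigenvalues have negative real part iff the trace is negative and the determinant is positive. Here the trace equals $-\beta_w I_w^* - m_w < 0$ and the determinant equals $\beta_w^2 S_w^* I_w^* > 0$, both guaranteed by $I_w^*, S_w^* > 0$. Since $I_w^* > 0$ is equivalent to $b_w > m_w S_w^* = m_w(\gamma_w+m_w)/\beta_w$, i.e. to $R_0^w > 1$, this delivers stability of $E_e^w$ exactly when $R_0^w > 1$.

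The computations are routine, so I do not anticipate a genuine obstacle; the only points requiring care are recognizing that the endemic equilibrium actually exists (has $I_w^* > 0$) precisely in the regime $R_0^w > 1$, so that the trace/determinant sign conditions are consistent with the threshold, and noting that this Jacobian argument yields \emph{local} asymptotic stability rather than a global statement, which would require an additional Lyapunov or monotone-dynamics argument if the stronger conclusion were sought.
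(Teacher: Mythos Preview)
Your treatment of the disease-free equilibrium $E_f^w$ coincides with the paper's: both substitute $I_w^*=0$, $S_w^*=b_w/m_w$ to obtain an upper-triangular block and read off the eigenvalues $-m_w$ and $\beta_w b_w/m_w - (\gamma_w+m_w)$.

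For the endemic equilibrium $E_e^w$ your route differs, and in fact is more careful than the paper's. The paper simply asserts that the eigenvalues of $J_w(E_e^w)$ are $-\frac{b_w\beta_w}{\gamma_w+m_w}$, $0$, and $-m_w$; but the upper-left $2\times 2$ block at $E_e^w$ is not triangular, and its determinant is $\beta_w^2 S_w^* I_w^* > 0$, so $0$ is not an eigenvalue and the listed numbers are not the spectrum. Your Routh--Hurwitz argument (trace $=-\beta_w I_w^* - m_w<0$, determinant $=\beta_w^2 S_w^* I_w^*>0$) is the correct way to handle that block and yields the stability conclusion rigorously. Your remark tying $I_w^*>0$ to $R_0^w>1$, and your caveat that this is a local result, are both appropriate.
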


\begin{proof}
    In the first case, let $R_0^w< 1$. We calculate that 
    
    $$J_w(E_f^w) = \begin{bmatrix}
    -m_w & -\frac{b_w\beta_w}{m_w} & 0 \\
    0 & \frac{b_w\beta_w}{m_w}-\gamma_w-m_w & 0 \\
    0 & \gamma_w & -m_w \\
    \end{bmatrix} 
    $$

    has eigenvalues $-m_w$ and $\frac{b_w\beta_w}{m_w}-\gamma_w-m_w$. Since $m_w > 0$ by assumption and  $\frac{b_w\beta_w}{m_w} < m_w+\gamma_w$ by the restriction on $R_0$, both eigenvalues are negative and so $E_f^w$ is stable when $R_0^w<1$.
    
    In the second, let $R_0^w > 1$. We have that 
    
    $$J_w(E_e^w) = \begin{bmatrix}
    -\frac{b_w\beta_w}{\gamma_w+m_w} & -\gamma_w-m_w & 0 \\
    \frac{b_w\beta_w}{\gamma_w+m_w}-m_w & 0 & 0 \\
    0 & \gamma_w & -m_w \\
    \end{bmatrix} 
    $$

    with eigenvalues $-\frac{b_w\beta_w}{\gamma_w+m_w}, 0, -m_w$. Since all parameters are positive, these eigenvalues are all negative and thus $E_e^w$ is stable.
\end{proof}

We have thus shown that one disease-free equilibrium and one endemic equilibrium exist among wild reservoir hosts, confirming the importance of $I_w > 0$ as a threshold condition for the spread of the disease.

\subsection{The Domestic Compartment}

In a similar manner, we can analyze the domestic compartment distinctly from the other two species, since interspecies interactions are limited to the force of infection $p_dS_dI_w$ attributed to the wild compartment. Any equilibrium $(S_d^*, I_d^*, T_d^*, R_d^*)$ in this compartment must satisfy the system
\begin{eqnarray}
    b_d - \beta_d S_d^* I_d^* -  p_d S_d^* I_w^* - \beta_d S_d^* T_d^* - m_d S_d^* &= 0,\label{sird1}  \\ 
    \beta_d S_d^* I_d^* + p_d S_d^* I_w^* - \mu I_d^*- \gamma_d I_d^* -m_d I_d^* &= 0, \label{sird2}  \\
    \mu I_d^* + \beta_d S_d^* T_d^* - \gamma_d T_d^* -m_d T_d^* &= 0, \label{sird3}  \\
    \gamma_d I_d^* + \gamma_d T_d^* -m_d R_d^* &= 0. \label{sird4}
\end{eqnarray}
Note that by summing equations \eqref{sird1}-\eqref{sird4}, we obtain the abundance of the domestic compartment at equilibrium, $b_d/m_d$. Since this compartment is subject to an external force of infection from the wild compartment, we also note that the existence of a disease-free equilibrium depends on this external influence.

\begin{lemma}
    A disease-free equilibrium, $E_f^d$, in the domestic compartment, $$(S_d^*, I_d^*, T_d^*, R_d^*) = \left(\frac{b_d}{m_d}, 0, 0, 0\right),$$ is only possible if $I_w = 0$ or $p_d = 0$.
\end{lemma}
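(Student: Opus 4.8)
The plan is to substitute the candidate disease-free state directly into the equilibrium system \eqref{sird1}--\eqref{sird4} and determine which of the four equations impose a genuine constraint. First I would set $I_d^* = T_d^* = R_d^* = 0$ throughout. Under this substitution, equation \eqref{sird1} collapses to $b_d - m_d S_d^* = 0$, which forces $S_d^* = b_d/m_d$; since $b_d$ and $m_d$ are strictly positive by assumption, this equilibrium value of the susceptible class is strictly positive. Equations \eqref{sird3} and \eqref{sird4} are then satisfied automatically, each reducing to $0 = 0$, so neither contributes an additional condition.

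The decisive step is equation \eqref{sird2}. Setting $I_d^* = 0$ there, every term carrying a factor of $I_d^*$ vanishes, and the equation reduces to the single coupling term $p_d S_d^* I_w = 0$. This term is precisely the external force of infection that the wild compartment exerts on the domestic one. Because I have already shown $S_d^* = b_d/m_d > 0$, the product $p_d S_d^* I_w$ vanishes if and only if $p_d = 0$ or $I_w = 0$, which is exactly the claimed dichotomy.

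There is no serious obstacle here: the result is essentially a one-line consequence of substitution once one observes that $S_d^*$ cannot be zero. The only point requiring a little care is to confirm that the remaining equilibrium equations \eqref{sird1}, \eqref{sird3}, and \eqref{sird4} are jointly consistent with $S_d^* = b_d/m_d$ and the vanishing of the infected and recovered classes, so that the condition extracted from \eqref{sird2} is both necessary and sufficient for the disease-free state to be a true equilibrium. This lemma will matter in the sequel because it shows that the domestic compartment, unlike the isolated wild compartment, generically fails to possess a disease-free equilibrium whenever the reservoir sustains infection ($I_w \neq 0$) and transmission from it is possible ($p_d \neq 0$).
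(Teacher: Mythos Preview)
Your proposal is correct and follows essentially the same idea as the paper: both arguments hinge on substituting the disease-free values into equation~\eqref{sird2} and observing that, since $S_d^* = b_d/m_d > 0$, the surviving term $p_d S_d^* I_w$ forces $p_d = 0$ or $I_w = 0$. The paper packages this as a proof by contradiction (assume $I_w > 0$ and $p_d \neq 0$, then derive $I_d^* \neq 0$), whereas you do it by direct substitution; this is a cosmetic difference, and your version has the minor advantage of also checking that \eqref{sird1}, \eqref{sird3}, \eqref{sird4} are consistent, so the condition is seen to be sufficient as well as necessary.
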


\begin{proof}
    Let $I_w > 0$ for some value of $t$ and $p_d \neq 0$, and assume that a disease-free equilibrium exists with $I_d^* = 0$. We note that a disease-free equilibrium requires a nonzero proportion of susceptible individuals, so $S_d^* > 0$ as well. Substituting these values into equation \eqref{sird2}, we obtain $p_dS_d^*I_w^* = 0$, a contradiction with our assumptions. Therefore $I_d^* \neq 0$. Substituting this value into equation \eqref{sird3}, we also obtain $T_d^* \neq 0$. However, these condition imply that there are infected individuals of both types in the domestic animal population, a contradiction with our assumption that we are analyzing a disease-free equilibrium. By contradiction, any disease-free equilibrium in the domestic compartment must have either $I_w = 0$ or $p_d = 0$ in the complete system.
\end{proof}

Note that this result is deeper than one about the equilibrium state of $I_w$: if $I_w > 0$ at any time over the course of the epidemic, even if the disease later vanishes from the wild population, the pathogen will spread to the domestic species.

Assuming that there is a force of infection from the wild reservoir hosts, we analyze the possible endemic equilibrium values and show that there is a unique possibility in this compartment as well.

\begin{theorem}
    There is only one admissible endemic equilibrium $E_e^d$ in the domestic compartment. At this equilibrium, we have $S_d^* < \min\{b_d/(m_d + p_dI_w^*), (\gamma_d + m_d)/\beta_d\}$.
\end{theorem}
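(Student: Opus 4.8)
The plan is to reduce the four-equation steady-state system \eqref{sird1}--\eqref{sird4} to a single equation in $S_d^*$ and then count its admissible roots. Throughout I assume the force of infection from the reservoir is active, i.e.\ $I_w^* > 0$ and $p_d > 0$; by the preceding Lemma this rules out a disease-free state, so any admissible equilibrium has $I_d^* > 0$ and $T_d^* > 0$, where by admissibility I mean all components are nonnegative, hence strictly positive once they are known to be nonzero.

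First I would add \eqref{sird2} and \eqref{sird3} so that the coupling term $\mu I_d^*$ cancels. Writing $J = I_d^* + T_d^*$, this yields $\beta_d S_d^* J + p_d S_d^* I_w^* - (\gamma_d + m_d)J = 0$, hence $J = p_d S_d^* I_w^* / (\gamma_d + m_d - \beta_d S_d^*)$. Since $J > 0$ and the numerator is positive, the denominator must be positive, which immediately delivers the bound $S_d^* < (\gamma_d + m_d)/\beta_d$. Rewriting \eqref{sird1} as $b_d = S_d^*(\beta_d J + p_d I_w^* + m_d)$ and discarding the positive term $\beta_d J$ gives $b_d > S_d^*(p_d I_w^* + m_d)$, i.e.\ the second bound $S_d^* < b_d/(m_d + p_d I_w^*)$. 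Thus both inequalities in the statement hold at every admissible endemic equilibrium.

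For uniqueness I would substitute the expression for $J$ into $b_d = S_d^*(\beta_d J + p_d I_w^* + m_d)$ and clear the denominator $\gamma_d + m_d - \beta_d S_d^*$. After the $S_d^{*2}$ cross terms cancel, this collapses to the quadratic $g(S_d^*) = m_d\beta_d S_d^{*2} - \bigl[b_d\beta_d + (p_d I_w^* + m_d)(\gamma_d + m_d)\bigr]S_d^* + b_d(\gamma_d + m_d) = 0$, whose leading coefficient $m_d\beta_d$ is positive. The decisive step is to evaluate $g$ at the threshold $S = (\gamma_d + m_d)/\beta_d$: a short computation shows the $b_d$-terms cancel and leaves $g((\gamma_d+m_d)/\beta_d) = -(\gamma_d+m_d)^2 p_d I_w^*/\beta_d < 0$. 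Since the parabola opens upward and is negative at the threshold, it has two distinct real roots straddling it; the upper root exceeds the positive threshold and is therefore positive, and since the product of the roots equals $b_d(\gamma_d+m_d)/(m_d\beta_d) > 0$ the lower root is positive as well. Hence exactly one root satisfies the admissibility constraint $S_d^* < (\gamma_d+m_d)/\beta_d$. That root fixes $J$, and then $I_d^*$, $T_d^*$ via \eqref{sird2}, \eqref{sird3} and $R_d^*$ via \eqref{sird4} are each determined uniquely and positively, giving a single admissible endemic equilibrium.

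The main obstacle I anticipate is the sign computation at the threshold: everything hinges on showing $g((\gamma_d+m_d)/\beta_d) < 0$, since this is precisely what separates the two positive roots and forces uniqueness. The algebra producing the quadratic must be arranged so that these cancellations are transparent; once the threshold value is pinned down as $-(\gamma_d+m_d)^2 p_d I_w^*/\beta_d$, the rest follows. A secondary check is to confirm the admissible root also keeps the denominator $\mu + \gamma_d + m_d - \beta_d S_d^*$ in \eqref{sird2} positive, which is automatic because $(\gamma_d+m_d)/\beta_d < (\mu+\gamma_d+m_d)/\beta_d$.
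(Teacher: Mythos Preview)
Your proof is correct and follows the same overall strategy as the paper---reduce the equilibrium system to a quadratic in $S_d^*$ and show that only one root respects the positivity constraints---but your execution differs in two places worth noting. First, you work directly with the total infected burden $J = I_d^* + T_d^*$ by summing \eqref{sird2} and \eqref{sird3}, whereas the paper expresses $I_d^*$ in terms of $T_d^*$ via \eqref{sird3} alone, then solves for $T_d^*$ from the sum \eqref{sird1}+\eqref{sird2}+\eqref{sird3}, and only afterwards cancels a common factor $1+\tfrac{1}{\mu}(\gamma_d+m_d-\beta_dS_d^*)$ to reach the same quadratic; your route sidesteps that cancellation and is shorter. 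Second, to isolate the admissible root you evaluate the quadratic at the threshold $(\gamma_d+m_d)/\beta_d$ and use the sign there together with the product of the roots, while the paper rewrites the quadratic as $f(x)=g(x)$ with $f$ linear and $g$ a concave parabola and applies the intermediate value theorem at the \emph{other} threshold $b_d/(m_d+p_dI_w^*)$. Both devices pin down the same smaller root; yours is purely algebraic and a bit tidier, while the paper's graphical $f=g$ picture makes the separation of the two roots visually transparent.
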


\begin{proof}
    Adding equations \eqref{sird1}-\eqref{sird3}, we obtain
    
    \begin{equation*}
    b_d-m_dS_d^*-(\gamma_d+m_d)I_d^*-(\gamma_d+m_d)T_d^* = 0.  \tag{$\star_1$}
    \end{equation*}

    Further, from equation \eqref{sird3}, we isolate

    \begin{equation*}
    I_d^* = \frac{1}{\mu}(\gamma_d+m_d-\beta_dS_d^*)T_d^*. \tag{$\star_2$} 
    \end{equation*}

    Substituting $\star_2$ into $\star_1$, we get
    
    \begin{equation*}
    T_d^* = \frac{b_d-m_dS_d^*}{(\gamma_d+m_d)+\frac{1}{\mu}(\gamma_d+m_d)(\gamma_d+m_d-\beta_dS_d^*)}. \tag{$\star_3$}
    \end{equation*}
    
        Since the quantities $I_d^*$ and $T_d^*$ are both nonnegative, it follows immediately that the equilibrium value $S_d^*$ must have a natural upper bound: 
\begin{equation}
S_d^* \le \frac{\gamma_d + m_d}{\beta_d} \le \frac{b_d}{m_d}. \label{sdbound}
\end{equation}
Hence we can confirm the denominator in ($\star_3$) is strictly positive.

    We also note that from \eqref{sird4}, we have $R_d^* = \frac{\gamma_d(I_d^*+T_d^*)}{m_d}$. We thus calculate the equilibrium values $S_d^*$ by substituting $\star_2, \star_3$ into \eqref{sird1} and obtain, after rearranging:
    \begin{align}
    b_d-p_dS_d^*I_w^*-m_dS_d^* &= \beta_dS_d^*(I_d^*+T_d^*) =\beta_dS_d^*(1+\frac{1}{\mu}(\gamma_d+m_d-\beta_dS_d^*))T_d^* \nonumber\\
   &  = \frac{\beta_dS_d^*(1+\frac{1}{\mu}(\gamma_d+m_d-\beta_dS_d^*))(b_d-m_dS_d^*)}{(\gamma_d+m_d)+\frac{1}{\mu}(\gamma_d+m_d)(\gamma_d+m_d-\beta_dS_d^*)}\nonumber\\
   & = \frac{\beta_dS_d^*(b_d-m_dS_d^*)}{(\gamma_d+m_d)}.\label{sdeq}
    \end{align}

We note that in the last step of the derivations above we cancel out the common factor $1+\frac{1}{\mu}(\gamma_d+m_d-\beta_d S_d^*) > 0$, which is guaranteed by the inequality ~\eqref{sdbound}. It is easy to observe that there exist at most two possible equilibrium values of $S_d^*$ as the roots of the quadratic equation \eqref{sdeq}, denoted by $S_{d_(1)}^* < S_{d_(2)}^*$, a consequence of our assumption that the transmission and recovery rates of both strains in domestic animals are equal.

We now proceed to prove only the smaller root $S_{d_(1)}^*$ is admissible for the long-term disease dynamics should there be nonzero disease burden (i.e., $I_d^* > 0$ and $T_d^* > 0$) in the domestic compartment. 
    In fact, we can view $S_d^*$ as the fixed point(s) satisfying 
    \[
    f(x) = g(x),
    \]
    where $f(x)$ is a simple linear function, given by
    \[
    f(x) =  b_d-(p_dI_w^*+m_d)x,
    \]
and $g(x)$ is a quadratic function, given by  
\[
g(x) = \frac{\beta_d x (b_d-m_d x)}{(\gamma_d+m_d)}.
\]

We can show that $f(0) = b_d > 0 = g(0)$, $f(b_d/(m_d+p_dI_w^*)) = 0 < g(b_d/(m_d+p_dI_w^*))$, and $0 > f(x) > g(x)$ for sufficiently large $x$. Furthermore, as both $f$ and $g$ are smooth continuous functions, according to the intermediate value theorem, there must exist two fixed points satisfying $f(x) = g(x)$, $S_{d_(1)}^* \in (0, b_d/(m_d+p_dI_w^*))$ and $S_{d_(2)}^* \in (b_d/(m_d+p_dI_w^*), \infty)$ (as also illustrated in Figure~\ref{fig:domesticEquilibriaGraph}).

We have $b_d-p_dS_d^*I_w^*-m_dS_d^* = \beta_dS_d^*(I_d^*+T_d^*) > 0$, for nonzero disease burden $I_d^* > 0, T_d^* > 0$. Hence we must have $S_d^* < b_d/(m_d+p_dI_w^*)$. So we complete our proof that only the smaller root $S_{d_(1)}^*$ is admissible as the unique endemic equilibrium in the domestic compartment.

    \begin{figure}
    \centering
    \includegraphics[scale=0.3]{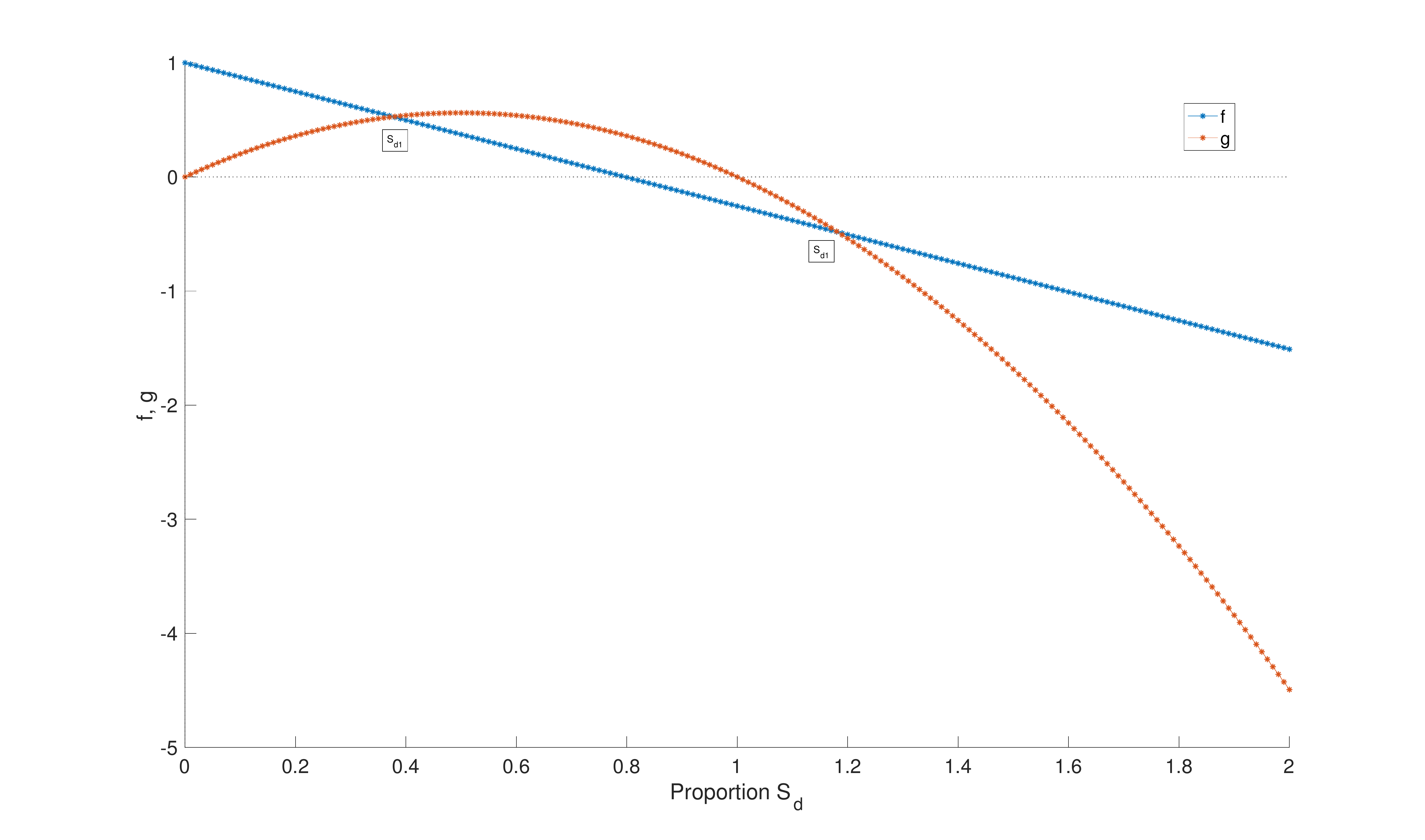}
    \caption{The graphs of $f(S_d^*)$ and $g(S_d^*)$; the $x$-coordinates of their intersection points give the proportion of domestic animals infected at the equilibria.}
    \label{fig:domesticEquilibriaGraph}
    \end{figure}

\end{proof}

In line with our proof above, it is easy to directly check the discriminant  of the quadratic equation in \eqref{sdeq} is positive:
\[
\Delta = [\beta_d b_d + (\gamma_d + m_d)(p_dI_w^* + m_d)]^2 - 4\beta_d m_d b_d(\gamma_d + m_d) > 0.
\]
Then we obtain the endemic equilibrium for $S_d^*=S_{d_(1)}^*$:
\[
S_{d_(1)}^* = \frac{\beta_d m_d + (\gamma_d+m_d)(p_d I_w^* + m_d) - \sqrt{\Delta}}{2\beta_d m_d}.
\]

We have thus shown that, in the domestic compartment, there is one unique endemic equilibrium $E_e^d = (S_d^*, I_d^*, T_d^*, R_d^*)$, where 
\begin{align*}
   S_d^* & = S_{d_(1)}^*, \\
   T_d^* & = \frac{b_d-m_dS_d^*}{(\gamma_d+m_d)+\frac{1}{\mu}(\gamma_d+m_d)(\gamma_d+m_d-\beta_dS_d^*)},\\
   I_d^* & = \frac{1}{\mu}(\gamma_d+m_d-\beta_dS_d^*)T_d^*, \\
   R_d^* & = \frac{\gamma_d(I_d^*+T_d^*)}{m_d}.
\end{align*}

 Further, as we prove above, this equilibrium must be admissible in the presence of a nonzero force of infection at any time from the wild host population ($p_d I_w^* > 0$). 

\subsection{The Human Compartment}

To complete our understanding of the different species involved in the model, we analyze the system of equilibrium equations in the human compartment as follows, 
\begin{eqnarray}
    b_h - \beta_h S_h I_h - p_h S_h T_d - m_h S_h & =& 0, \label{sirh1}\\
    \beta_h S_h I_h + p_h S_h T_d - \gamma_h I_h - m_h I_h &= & 0, \label{sirh2} \\
    \gamma_h I_h - m_h R_h & = &  0.\label{sirh3} 
\end{eqnarray}
Adding \eqref{sirh1} - \eqref{sirh3} gives a total abundance of $\frac{b_h}{m_h}$ in the human compartment at equilibrium. We begin our analysis of this compartment by noting an identical result from the domestic one: a disease-free equilibrium can exist in the human compartment only if the force of infection from domestic animals is zero. 

\begin{lemma}
    A disease-free equilibrium $E_f^h$ in the human compartment, $(S_h^*, I_h^*, R_h^*) = (\frac{b_h}{m_h}, 0, 0)$, is only possible if $T_d = 0$ or $p_h = 0$.
\end{lemma}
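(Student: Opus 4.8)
The plan is to argue by contradiction, closely mirroring the structure of the earlier domestic-compartment lemma (Lemma 1), which is the natural template here. I would begin by supposing the negation: assume that a disease-free equilibrium $E_f^h$ exists while \emph{both} $p_h \neq 0$ and $T_d \neq 0$ (i.e., neither of the escape clauses in the conclusion holds), and derive an impossibility from the equilibrium equations \eqref{sirh1}--\eqref{sirh3}.

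The key preliminary observation is that any genuine disease-free equilibrium must carry a nonzero susceptible population; indeed, summing \eqref{sirh1}--\eqref{sirh3} fixes the total human abundance at $b_h/m_h$, and with $I_h^* = R_h^* = 0$ this forces $S_h^* = b_h/m_h > 0$ (using $b_h, m_h > 0$). With $S_h^* > 0$ in hand, I would substitute the disease-free value $I_h^* = 0$ directly into the infected-human balance equation \eqref{sirh2}. The terms $\beta_h S_h^* I_h^*$, $\gamma_h I_h^*$, and $m_h I_h^*$ all vanish, and what remains is simply $p_h S_h^* T_d = 0$. Since each of the three factors $p_h$, $S_h^*$, and $T_d$ is strictly positive under our assumptions, this product cannot be zero, giving the desired contradiction. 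Hence at least one of $T_d = 0$ or $p_h = 0$ must hold.

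I expect no real obstacle in this argument, and in fact it should be shorter than the domestic case: in Lemma 1 the force of infection fed into a compartment ($I_d$) that in turn seeded a second infected class ($T_d$), so the contradiction had to be propagated through two equations \eqref{sird2} and \eqref{sird3}; here the human compartment has no downstream mutated class, so the single substitution into \eqref{sirh2} closes the argument immediately. The only point deserving care is the justification that $S_h^* > 0$ rather than merely $S_h^* \geq 0$, which I would pin down via the total-abundance identity as above, so that the factorization $p_h S_h^* T_d = 0$ genuinely forces one of the interspecies parameters to vanish. As with the domestic lemma, it is worth remarking afterward that this is a statement about $T_d$ at any time, not just at equilibrium: once the human-transmissible strain appears in the intermediate host with a nonzero spillover rate $p_h$, a disease-free human state is excluded.
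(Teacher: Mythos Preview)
Your proposal is correct and follows essentially the same approach as the paper: a contradiction argument that substitutes $I_h^*=0$ into the infected-human equilibrium equation \eqref{sirh2} to obtain $p_h S_h^* T_d = 0$, then uses $S_h^*>0$ to force $p_h=0$ or $T_d=0$. Your version is slightly more careful in justifying $S_h^*>0$ via the total-abundance identity, whereas the paper simply asserts that $S_h^*\neq 0$ at a disease-free equilibrium, but the argument is otherwise identical.
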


\begin{proof}
    In a manner similar to the proof of Lemma 1, let $T_d > 0$ at any time over the course of the model and $p_h \neq 0$, and assume that a disease-free equilibrium exists with $I_h^* = 0$. By equation (9), we obtain $p_hS_h^*T_d^* = 0$, a contradiction with our assumptions and with the fact that $S_h^* \neq 0$ at a disease-free equilibrium. By contradiction, any disease-free equilibrium must have either $T_d = 0$ or $p_h = 0$.
\end{proof}

Thus, in the presence of any force of infection $p_hT_d$ from domestic animals, there must be an endemic equilibrium in the human compartment. We note again that if $T_d > 0$ at any time over the course of the epidemic, even if that population of animals vanishes at equilibrium, it is enough to seed the infection into the human compartment.

\begin{theorem}
    There exists only one admissible endemic equilibrium in the human compartment $E_h^e = (S_h^*, I_h^*, R_h^*)$, where $S_h^*$ is given by the smaller root of the quadratic equation:
    \[
    b_h - p_h S_h^* T_d^* - m_h S_h^* = \beta_h S_h^*(b_h - m_h S_h^*)/(\gamma_h + m_h).
    \]
\end{theorem}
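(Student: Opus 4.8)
The plan is to reduce the human subsystem to a single quadratic in $S_h^*$ in exactly the same spirit as the proof of Theorem 3, exploiting that $T_d^*$ enters only as a fixed external force of infection $p_h T_d^*$ (the analogue of $p_d I_w^*$ in the domestic compartment). First I would add the equilibrium equations \eqref{sirh1} and \eqref{sirh2}; the cross terms $\beta_h S_h^* I_h^*$ and $p_h S_h^* T_d^*$ cancel, leaving $b_h - m_h S_h^* - (\gamma_h + m_h)I_h^* = 0$, so that $I_h^* = (b_h - m_h S_h^*)/(\gamma_h + m_h)$. Substituting this expression back into \eqref{sirh1} immediately yields the quadratic in the statement, namely $b_h - p_h S_h^* T_d^* - m_h S_h^* = \beta_h S_h^*(b_h - m_h S_h^*)/(\gamma_h + m_h)$. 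Note that no cancellation of a mutation factor is required here, since the human compartment carries no transmissible subclass, so this step is strictly simpler than its domestic counterpart.

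Next I would establish that this quadratic has exactly two real roots and isolate the admissible one via the same fixed-point argument used for the domestic compartment. Writing the equation as $f(x) = g(x)$ with the linear function $f(x) = b_h - (p_h T_d^* + m_h)x$ and the quadratic $g(x) = \beta_h x(b_h - m_h x)/(\gamma_h + m_h)$, I would verify the sign pattern $f(0) = b_h > 0 = g(0)$, then $f(b_h/(m_h + p_h T_d^*)) = 0 < g(b_h/(m_h + p_h T_d^*))$, and finally $0 > f(x) > g(x)$ for $x$ large (since $g$ diverges to $-\infty$ quadratically while $f$ does so only linearly). By the intermediate value theorem these sign changes force one root $S_{h(1)}^* \in (0,\, b_h/(m_h + p_h T_d^*))$ and a second root $S_{h(2)}^*$ lying above $b_h/(m_h + p_h T_d^*)$, with $S_{h(1)}^* < S_{h(2)}^*$.

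To select the admissible root, I would argue that at any endemic equilibrium the disease burden is positive, so $b_h - p_h S_h^* T_d^* - m_h S_h^* = \beta_h S_h^* I_h^* > 0$; this forces $S_h^* < b_h/(m_h + p_h T_d^*)$ and hence $S_h^* = S_{h(1)}^*$, the smaller root. The resulting $I_h^* = (b_h - m_h S_h^*)/(\gamma_h + m_h)$ is then automatically positive, because the chain $S_h^* < b_h/(m_h + p_h T_d^*) \le b_h/m_h$ guarantees $b_h - m_h S_h^* > 0$, and $R_h^* = \gamma_h I_h^*/m_h$ follows directly from \eqref{sirh3}. This establishes both existence and uniqueness of the admissible endemic equilibrium.

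I anticipate that none of these steps poses a serious obstacle: the human compartment is genuinely a degenerate case of the domestic one (no mutation, no $T$ class), so the only real care needed is the bookkeeping to confirm the discriminant $\Delta_h = [\beta_h b_h + (\gamma_h + m_h)(p_h T_d^* + m_h)]^2 - 4\beta_h m_h b_h(\gamma_h + m_h)$ is strictly positive, ensuring two distinct real roots, and to verify the inequality chain that pins down the smaller one. If anything is delicate, it is ensuring that the force of infection $p_h T_d^*$ is genuinely positive, so that an endemic rather than a disease-free equilibrium is compelled; this is supplied by Lemma 2 together with the positivity of $T_d^*$ already established at the domestic endemic equilibrium $E_e^d$.
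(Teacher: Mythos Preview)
Your proposal is correct and follows essentially the same approach as the paper: derive $I_h^*$ by adding \eqref{sirh1} and \eqref{sirh2}, substitute into \eqref{sirh1} to obtain the quadratic, and interpret it as $f(x)=g(x)$ with the same linear $f$ and parabolic $g$. The only minor divergence is in how the larger root is excluded: you mirror the Theorem~3 argument ($\beta_h S_h^* I_h^*>0$ forces $S_h^*<b_h/(m_h+p_hT_d^*)$), whereas the paper instead observes that $S_{h(2)}^*>b_h/m_h$, the total compartment size, and so is biologically inadmissible---but both arguments are valid and the overall route is the same.
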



\begin{proof}
    From equation \eqref{sirh3}, we know 
    \[R_h^* = \frac{\gamma_hI_h^*}{m_h}.\]
    By adding equations \eqref{sirh1} and \eqref{sirh2}, we obtain
    $$ b_h-m_hS_h^*-(\gamma_h+m_h)I_h^* = 0.$$
Accordingly, $I_h^* = \frac{b_h-m_hS_h^*}{\gamma_h+m_h}$. Substituting $I_h^*$ into \eqref{sirh1}, we obtain 
    
    \begin{equation}
    b_h-p_hS_h^*T_d^*-m_hS_h^* = \frac{\beta_hS_h^*(b_h-m_hS_h^*)}{\gamma_h+m_h}.\label{hqua}
    \end{equation}
    Defining the left-hand side as $f(S_h^*) =  b_h-p_hS_h^*T_d^*-m_hS_h^*$ and the right-hand side as $g(S_h^*) =  \frac{\beta_hS_h^*(b_h-m_hS_h^*)}{\gamma_h+m_h}$, as in the proof of Theorem 3, any endemic equilibrium in the human compartment must satisfy the fixed point(s) $f(S_h^*)=g(S_h^*)$ (see Figure~\ref{fig:humanEquilibriaGraph}).
    
    \begin{figure}
    \centering
    \includegraphics[scale=0.3]{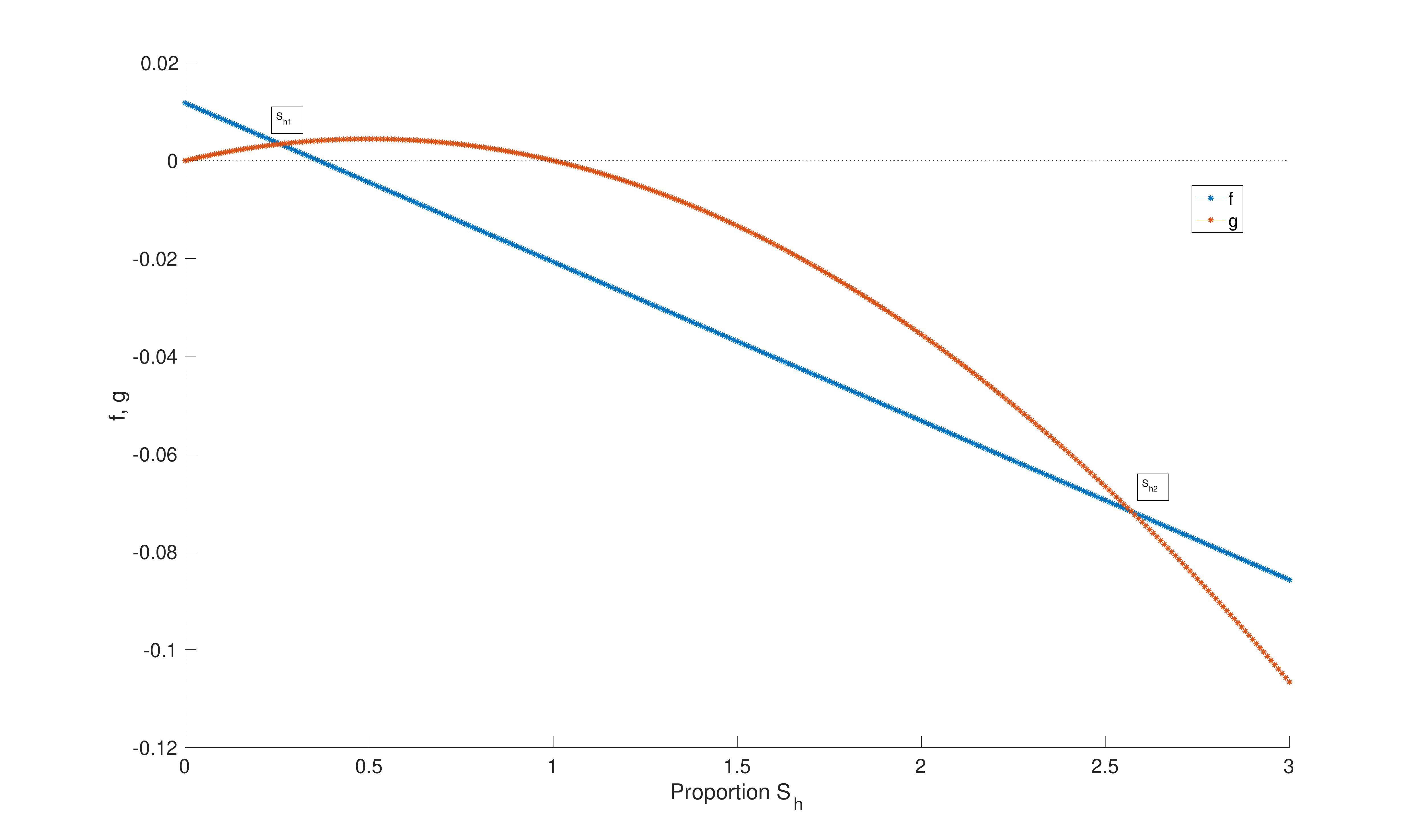}
    \caption{The graphs of $f(S_h^*)$ and $g(S_h^*)$; the $x$-coordinates of their intersection points give the proportion of humans infected at the equilibria.}
    \label{fig:humanEquilibriaGraph}
\end{figure}

    $f$ is a negatively-sloped line with its root at $f_1 = \frac{b_h}{p_hT_d^*+m_h}$. $g$ is a concave parabola with roots at $g_1 = 0$ and $g_2 = \frac{b_h}{d_h}$. We thus have $g_1 < f_1 < g_2$, so there are two intersection points $S_{h_(1)}^* < f_1$ and $ S_{h_(2)}^* > g_2$ (by the intermediate value theorem). However, $g_2$ is the total compartment size in the human compartment; $S_{h_(2)}$ is thus biologically impossible.
    
    Solving the quadratic equation \eqref{hqua} directly and taking its smaller root, we obtain the only viable endemic equilibrium:
    $$S_{h_(1)}^* =
    \frac{\beta_h b_h + (m_h + \gamma_h)(p_hT_d^* + m_d)-\sqrt{[\beta_h b_h + (m_h + \gamma_h)(p_hT_d^* + m_d)]^2-4\beta_h m_h b_h(\gamma_h + m_h)}}{2\beta_h m_h}.$$
   
\end{proof}

We have thus shown that in the presence of a force of infection from domestic animals, there is an admissible endemic equilibrium in the human compartment at $E_h^e = (S_h^*, I_h^*, R_h^*)$, where
\begin{align*}
   S_h^* & = S_{h_(1)}^*, \\
   I_h^* & = \frac{b_h-m_hS_h^*}{\gamma_h+m_h}.\\
   R_h^* & = \frac{\gamma_hI_h^*}{m_h}.
\end{align*}

\subsection{System Stability and Basic Reproduction Number}

As shown in the previous sections, the endemic equilibria in each compartment are unique; under the assumption that there is a nonzero force of infection between species compartments and in the presence of circulating disease in the wild reservoir, we use the results of Theorems 1, 3, and 4 to obtain an endemic equilibrium at
$$
E_e = (S_w^*, I_w^*, R_w^*, S_d^*, I_d^*, T_d^*, R_d^*, S_h^*, I_h^*, R_h^*).
$$
The formula of these expressions can be found in detail above.

More precisely, the existence of such stable endemic disease equilibiria requires an exact condition, that is, the basic reproductive number of the entire system $R_0 > 1$. Otherwise, there can exist a stable disease-free equilibria. We use next-generation approach as detailed in \cite{diekmann2009construction} and \cite{van2002reproduction} to calculate $R_0$ for the system. 

The system's $R_0$ is the spectral radius of $FV^{-1}$, where $F$ describes the rate of appearance of new infections in each compartment of host individuals,
    $$ F = \begin{bmatrix}
        \beta_w S_w & 0 & 0 & 0 \\
        p_d S_d & \beta_d S_d & 0 & 0 \\
        0 & 0 & \beta_dS_d & 0 \\
        0 &0 &p_h S_h & \beta_h S_h\\
    \end{bmatrix}, $$
    
and $V$ describes the rate of transfer of individuals out of each compartment,

    $$V = \begin{bmatrix}
        \gamma_w+m_w & 0 & 0 & 0 \\
        0& \mu+\gamma_d+m_d & 0 & 0 \\
        0 & -\mu & \gamma_d+m_d &0 \\
        0 &0 &0 & \gamma_h+m_h\\
\end{bmatrix}.$$

We thus get 

$$ FV^{-1} = \begin{bmatrix}
        \frac{\beta_wS_w}{m_w(\gamma_w+m_w)}& 0 & 0 & 0 \\
        \frac{p_dS_d}{\gamma_w+m_w}& \frac{\beta_dS_d}{\mu+\gamma_d+m_d}& 0 & 0 \\
        0 & \frac{\mu\beta_dS_d}{(\gamma_d+m_d)(\mu+\gamma_d+m_d)} & \frac{\beta_dS_d}{\gamma_d+m_d} & 0 \\
        0 & \frac{\mu p_hS_h}{(\gamma_d+m_d)(\mu+\gamma_d+m_d)} & \frac{p_hS_h}{\gamma_d+m_d} & \frac{\beta_hS_h}{\gamma_h+m_h}\\
\end{bmatrix}.
$$
 
$R_0$ is then the maximum of the eigenvalues of this matrix, 

$$R_0 = \textrm{max}\{\frac{\beta_wS_w}{\gamma_w+m_w}, \frac{\beta_dS_d}{\mu+\gamma_d+m_d},\frac{\beta_dS_d}{\gamma_d+m_d}, \frac{\beta_hS_h}{\gamma_h+m_h}\}.  $$

At the disease-free equilibrium, we have $S_w = b_w/m_w, S_d = b_d/m_d, S_h = b_h/m_h$. Therefore, the $R_0$ value in a population entirely composed of susceptible individuals is 
\begin{equation}
    R_0 = \textrm{max}\left\{\frac{\beta_wb_w}{m_h(\gamma_w+m_w)}, \frac{\beta_db_d}{m_d(\mu+\gamma_d+m_d)},\frac{\beta_d b_d}{m_d(\gamma_d+m_d)}, \frac{\beta_hb_h}{m_h(\gamma_h+m_h)}\right\}. 
\end{equation}

We further establish that $R_0 > 1$ retains its traditional role as the threshold for determining the viability of an epidemic using an analysis of the system's Jacobian matrix at the disease-free equilibria. Since we are interested only in the total number of infected individuals, we consider the time evolution of disease burden across all three compartments in the form $(I_w, I_d, T_d, I_h)$:
\begin{eqnarray*}
 dI_w/dt & = & \beta_w S_w I_w - \gamma_w I_w -m_w I_w, \\
 dI_d/dt & = & \beta_d S_d I_d + p_d S_d I_w - \mu I_d- \gamma_d I_d -m_d I_d  \\
  dT_d/dt & = & \mu I_d + \beta_d S_d T_d - \gamma_d T_d -m_d T_d  \\
  dI_h/dt & = & \beta_h S_h I_h + p_h S_h T_d - \gamma_h I_h - m_h I_h. 
\end{eqnarray*}

The Jacobian matrix of this system above at the disease-free equilibrium $E_f = (0,0,0,0)$ is

\begin{math}
J(E_f) =
\begin{bmatrix}
\beta_wS_w-(\gamma_w+m_w) & 0 & 0 & 0 \\

p_dS_d & \beta_dS_d-(\mu+\gamma_d+m_d) & 0 & 0 \\

0 & \mu & \beta_dS_d-(\gamma_d+m_d) & 0 \\

0 & 0 & p_hS_h & \beta_hS_h-(\gamma_h+m_h) \\
\end{bmatrix}.
\end{math}\\

We first establish results on the stability of the disease-free equilibrium $E_f$.

\begin{theorem}
    $E_f$ is asymptotically stable if $R_0 < 1$ and unstable if $R_0 > 1$.
\end{theorem}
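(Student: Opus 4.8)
The plan is to prove stability by linearization, reading everything off the Jacobian $J(E_f)$ already displayed above. The crucial observation is that $J(E_f)$ is lower triangular, so its eigenvalues are exactly the four diagonal entries; in particular they are all real, and no characteristic polynomial need be computed.

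The key step is then to evaluate these diagonal entries at the disease-free values $S_w = b_w/m_w$, $S_d = b_d/m_d$, $S_h = b_h/m_h$ and to factor out the strictly positive removal rate appearing in each. For the wild compartment, for instance, this yields
$$\lambda_1 = \beta_w \tfrac{b_w}{m_w} - (\gamma_w + m_w) = (\gamma_w + m_w)\left[\frac{\beta_w b_w}{m_w(\gamma_w+m_w)} - 1\right],$$
and by the same manipulation $\lambda_2$, $\lambda_3$, $\lambda_4$ equal the positive factors $\mu+\gamma_d+m_d$, $\gamma_d+m_d$, and $\gamma_h+m_h$ multiplied by their respective bracketed terms. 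Those bracketed terms are precisely the four expressions whose maximum is the system's $R_0$ in the displayed formula. Because each prefactor is positive, the sign of $\lambda_i$ coincides with the sign of its bracket, so $\lambda_i < 0$ exactly when the corresponding component of $R_0$ is below one.

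The conclusion then follows directly from $R_0$ being the maximum of these components. If $R_0 < 1$, every component lies below one, all four eigenvalues are strictly negative, and $E_f$ is locally asymptotically stable by Lyapunov's indirect method; if $R_0 > 1$, at least one component exceeds one, forcing a positive eigenvalue and hence instability. I expect no genuine obstacle here: the triangular structure trivializes the spectrum and the factoring is routine. The only points deserving care are matching each eigenvalue to the correct term inside the maximum, and observing that the statement deliberately excludes the borderline case $R_0 = 1$, where a zero eigenvalue appears and linearization alone is inconclusive.
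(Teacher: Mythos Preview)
Your proposal is correct and follows essentially the same approach as the paper: both arguments exploit the lower-triangular structure of $J(E_f)$ to identify the eigenvalues with the diagonal entries, then relate their signs to the four terms appearing in the maximum defining $R_0$. Your explicit factoring of each eigenvalue as a positive removal rate times $[\text{component of }R_0 - 1]$ makes the sign correspondence more transparent than the paper's terse statement, but the underlying reasoning is identical.
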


\begin{proof}
At $E_f = (0, 0, 0, 0)$, we have $S_w = b_w/m_w, S_d = b_d/m_d, S_h = b_h/m_h$. Thus, $J(E_f) = $
$$ 
\begin{bmatrix}
\beta_w b_w/m_w -(\gamma_w+m_w) & 0 & 0 & 0 \\

p_dS_d & \beta_d b_d/m_d-(\mu+\gamma_d+m_d) & 0 & 0 \\

0 & \mu & \beta_d b_d/m_d -(\gamma_d+m_d) & 0 \\

0 & 0 & p_hS_h & \beta_h b_h/m_h-(\gamma_h+m_h)
\end{bmatrix}$$.

If $R_0 < 1$, then the diagonal entries of $J(E_f)$, which are actually the eigenvalues of the Jacobian matrix $J(E_f)$, are strictly negative. Thus $E_f$ is asymptotically stable if $R_0 < 1$. If $R_0 > 1$, then at least one of the eigenvalues of the Jacobian matrix $J(E_f)$ (its diagonal entries) is strictly positive. Therefore $E_f$ is unstable if $R_o > 1$.

\end{proof}

We then establish similar results for the endemic equilibrium. 

\begin{theorem}
    $E_e$ is asymptotically stable if $R_0 > 1$ and unstable if $R_0 < 1$.
\end{theorem}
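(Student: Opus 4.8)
The plan is to linearize the full ten-dimensional system at $E_e$ and exploit the fact that the interspecies coupling runs strictly one way. The wild equations contain no domestic or human variables, and the domestic equations contain no human variables; the only cross terms are $p_d S_d I_w$ (wild feeding domestic) and $p_h S_h T_d$ (domestic feeding human). Hence the Jacobian at $E_e$ is block lower-triangular,
\[
J(E_e) = \begin{bmatrix} A_w & 0 & 0 \\ B_{dw} & A_d & 0 \\ 0 & B_{hd} & A_h \end{bmatrix},
\]
where $A_w$ is the $3\times 3$ wild block at $E_e^w$, $A_d$ the $4\times 4$ domestic block at $E_e^d$, and $A_h$ the $3\times 3$ human block at $E_h^e$. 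The spectrum of $J(E_e)$ is therefore the union of the spectra of the three diagonal blocks, so it suffices to show each block is stable.

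First I would dispatch the wild block using Theorem 2: $E_e^w$ is asymptotically stable exactly when $R_0^w>1$, so $A_w$ contributes only eigenvalues with negative real part (its nontrivial $2\times2$ subblock has negative trace, and since $R_0^w>1$ is equivalent to $b_w\beta_w>m_w(\gamma_w+m_w)$ its determinant is positive). In both the domestic and human blocks I would then peel off the recovered variable, whose column is $(0,\dots,-m)^{\top}$, yielding the decoupled eigenvalue $-m_d$ (resp.\ $-m_h$) and reducing the problem to a $3\times3$ matrix $M_d$ in $(S_d,I_d,T_d)$ and a $2\times2$ matrix $M_h$ in $(S_h,I_h)$. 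Using the equilibrium identities I would rewrite the diagonal entries in forced form: the relation $dI_d/dt=0$ gives the $(2,2)$ entry of $M_d$ as $-p_d S_d^* I_w^*/I_d^*<0$, the bound $S_d^*<(\gamma_d+m_d)/\beta_d$ from Theorem 3 makes the $(3,3)$ entry negative, and the $(1,1)$ entry is manifestly negative, so $\operatorname{tr}M_d<0$. The human block is the easy end: $dI_h/dt=0$ forces $(\gamma_h+m_h)-\beta_h S_h^*=p_h S_h^* T_d^*/I_h^*>0$, so both diagonal entries of $M_h$ are negative and its determinant is positive, placing both eigenvalues in the left half-plane exactly as in the graphical argument of Theorem 4.

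The main obstacle is the domestic block $M_d$, where the mutation term $\mu$ couples $I_d$ and $T_d$ and no further triangular reduction is available. Here I would apply the Routh--Hurwitz criterion to $\lambda^3+a_1\lambda^2+a_2\lambda+a_3$: $a_1>0$ follows from the negative trace, while the remaining conditions $a_3>0$ and $a_1a_2-a_3>0$ must be verified by substituting $(\star_2)$ and $\beta_d S_d^*<\gamma_d+m_d$ to control the signs of the $2\times2$ principal minors and the determinant; this sign bookkeeping is the genuinely laborious step. Finally, to secure the stated equivalence I would observe that admissibility of $E_e$ already forces $R_0>1$, since $I_w^*>0$ requires $S_w^*=(\gamma_w+m_w)/\beta_w<b_w/m_w$, i.e.\ $R_0^w>1$; conversely, when $R_0<1$ Theorem 5 makes the disease-free equilibrium asymptotically stable, and the transcritical exchange of stability at $R_0=1$ renders $E_e$ unstable (or inadmissible), completing both directions.
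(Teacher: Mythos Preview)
Your approach is sound but genuinely different from the paper's. The paper does not linearize the full ten-dimensional flow at all: it restricts attention to the four infected variables $(I_w,I_d,T_d,I_h)$, writes down the $4\times4$ lower-triangular Jacobian of that reduced system with the susceptible proportions frozen at their equilibrium values, and observes that the equilibrium bounds $S_w^*=(\gamma_w+m_w)/\beta_w$, $S_d^*<(\gamma_d+m_d)/\beta_d$, $S_h^*<(\gamma_h+m_h)/\beta_h$ force every diagonal entry (hence every eigenvalue) to be nonpositive. That is the entire argument; there is no Routh--Hurwitz step and no treatment of the $S$ or $R$ directions.

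What each route buys: the paper's shortcut is two lines long, but it is heuristic rather than rigorous---freezing $S_w,S_d,S_h$ is not the Jacobian of the actual flow, and the wild diagonal entry comes out exactly zero, which linearization alone cannot upgrade to asymptotic stability. Your block lower-triangular decomposition of the full $10\times10$ Jacobian, with the recovered columns peeled off and the $2\times2$ human block handled by trace/determinant, is the honest version of the computation; the price is the $3\times3$ domestic block, where you correctly identify Routh--Hurwitz as the tool and the inequality $\beta_dS_d^*<\gamma_d+m_d$ together with $(\star_2)$ as the inputs, but leave the sign bookkeeping for $a_3>0$ and $a_1a_2>a_3$ unfinished. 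Your closing remark that admissibility of $E_e$ already forces $R_0^w>1$, so the $R_0<1$ direction is essentially a transcritical exchange with $E_f$, is also sharper than what the paper offers.
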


\begin{proof}
According to our analysis of endemic equilibria above, we have 
\begin{eqnarray*}
S_w^* & = & \frac{\gamma_w + m_w}{\beta_w}, \\
S_d^* & < & \frac{\gamma_d + m_d}{\beta_d},\\
S_h^* & < & \frac{\gamma_h + m_h}{\beta_h}.
\end{eqnarray*}

Substituting into the Jacobian matrix $J(E_e)$ the values for $S_w^*, S_d^*, S_h^*$,  the diagonal entries of $J(E_e)$ are either zero or negative. Therefore $E_e$ is asymptotically stable if $R_0 > 1$. Similarly, we can prove  $E_e$ is unstable if $R_0 < 1$.

\end{proof}

Thus $R_0$ retains its role as the threshold condition for an epidemic.

 \subsection{Threshold Parameters}

The results above replicate the standard epidemiological finding that $R_0$ is a threshold condition for the system, but the interspecies connections in this model allow us to establish a more detailed result. By Theorem 2 and Lemmas 1 and 2, the stability of the equilibria depends on $p_d$, $p_h$ and $\mu$, the same parameters which control $S_w, S_d,$ and $S_h$ in the calculation of $R_0$. Indeed, only these parameters determine the results of the epidemic.

\begin{theorem} 
    In the presence of a nonzero number of infected wild animals, $E_e$ is stable if and only if $p_d, \mu, p_h > 0$.
\end{theorem}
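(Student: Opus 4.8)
The plan is to prove the biconditional by reading the full endemic equilibrium $E_e$ as the end of a single transmission chain, Wild $\to$ Domestic $\to$ Transmissible strain $\to$ Human, in which the three interspecies parameters $p_d$, $\mu$, and $p_h$ are exactly the three links. Since $E_e$ is the equilibrium at which disease persists in \emph{every} compartment (so that $I_d^*, T_d^*, I_h^* > 0$), the argument is to show that breaking any one link forces a disease-free state in all downstream compartments, whereas keeping all three links active propagates the infection all the way to humans and leaves $R_0 > 1$, so that Theorem 6 delivers stability.

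For the necessity direction (``only if''), I would argue by contrapositive, treating the three possible broken links as cases. If $p_d = 0$, Lemma 1 yields the domestic disease-free equilibrium, so $I_d^* = T_d^* = 0$; feeding $T_d^* = 0$ into Lemma 2 then forces the human compartment disease-free as well, so $E_e$ cannot be the stable full endemic state. If $p_h = 0$, Lemma 2 directly gives $I_h^* = 0$, again collapsing $E_e$. The intermediate case $\mu = 0$ is the one not handled by either lemma: here I would return to the equilibrium relation \eqref{sird3}, which with $\mu = 0$ reads $T_d^*(\beta_d S_d^* - \gamma_d - m_d) = 0$. Invoking the strict bound $S_d^* < (\gamma_d + m_d)/\beta_d$ established in Theorem 3 makes the parenthesized factor strictly negative, so $T_d^* = 0$, and Lemma 2 again sends the humans disease-free. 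In every case the downstream compartments lose the disease, so stability of $E_e$ as a genuinely full endemic equilibrium fails.

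For the sufficiency direction (``if''), I would run the same chain forward, now using the contrapositives of the lemmas. Under the hypothesis that there is a nonzero infected wild population, the wild endemic equilibrium of Theorem 1 has $I_w^* > 0$, and its formula $I_w^* = (b_w - m_w S_w^*)/(\beta_w S_w^*)$ with $S_w^* = (\gamma_w + m_w)/\beta_w$ forces $R_0^w = \beta_w b_w/[m_w(\gamma_w + m_w)] > 1$, hence the system's $R_0 = \max\{\cdots\} > 1$. With $p_d > 0$ and $I_w^* > 0$, Lemma 1 rules out a domestic disease-free equilibrium, so by Theorem 3 we sit at $E_e^d$ with $I_d^* > 0$; then $\mu > 0$ together with \eqref{sird3} gives $T_d^* > 0$; then $p_h > 0$ with $T_d^* > 0$ rules out a human disease-free equilibrium by Lemma 2, so by Theorem 4 we have $I_h^* > 0$. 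Thus $E_e$ is a genuine endemic equilibrium in all ten coordinates, and since $R_0 > 1$, Theorem 6 guarantees it is asymptotically stable.

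I expect the main obstacle to be conceptual rather than computational: pinning down precisely what ``$E_e$ is stable'' should mean when a link is broken, since in that regime the system does not approach the full endemic point at all but a mixed equilibrium that is endemic upstream and disease-free downstream. The cleanest resolution is to interpret the theorem as characterizing when the full-system endemic equilibrium both exists (in all ten coordinates) and is the stable attractor, and to lean on the strict inequality from Theorem 3 to dispose of the boundary case $\beta_d S_d^* = \gamma_d + m_d$ in the $\mu = 0$ argument. The remaining steps are immediate applications of Lemmas 1 and 2 and of Theorems 1, 3, 4, and 6.
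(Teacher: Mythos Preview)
Your proposal is correct and follows essentially the same chain-of-infection argument as the paper's own proof, using Lemmas~1 and~2 to handle the links $p_d$ and $p_h$ and equation~\eqref{sird3} to handle $\mu$. Your version is in fact more careful than the paper's: you explicitly invoke the strict bound $S_d^* < (\gamma_d+m_d)/\beta_d$ from Theorem~3 to close the $\mu=0$ case (the paper simply asserts that $T_d>0$ requires $\mu>0$), and you explicitly derive $R_0^w>1$ from $I_w^*>0$ so that Theorem~6 can be applied, whereas the paper's sufficiency direction only argues informally that a positive force of infection propagates downstream.
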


\begin{proof}
    ($\Rightarrow$) If the disease-free equilibrium is stable, Lemmas 1 and 2 show that $p_d, I_w, p_h, T_d > 0$. Considering equation (6), the only way to obtain $T_d > 0$ is to have $\mu > 0$ as well.  
    
    ($\Leftarrow$) If $p_d, p_h, \mu > 0$, a nonzero proportion $I_w$ creates a positive force of infection in equation (5), and since $\mu > 0$ there is a positive force of infection in equation (6) as well. With $T_d^* > 0$ and $p_h > 0$, there is a positive force of infection in equation (9), and so $I_h^* > 0$, creating an endemic equilibrium in the human compartment. If all of these conditions hold, regardless of the values of $\beta_i$ or $\gamma_i$ in any species, there is a nonzero, constant force of infection for each species and so the system is forced into an endemic equilibrium.
\end{proof}

\subsection{Conclusion}

This model has one disease-free equilibrium and one endemic equilibrium, whose stability depends on $p_d, p_h$ and $\mu$. Isolating these parameters thus provides suggestions for possible interventions. The results of Theorem 7, in particular, show that while many parameters of the model can be changed by human intervention$-\beta_d$ could be lowered by increasing biosecurity on farms for domestic animals, for example, while much of public health and medicine offers strategies for changing $\beta_h$ and $\gamma_h-$the only effective route for eliminating the possibility of a zoonotic epidemic in humans is to eliminate contact between species or the possibility of pathogen mutation, an impossible requirement in any real system.

\section{Numerical Simulations} \label{ch:simulations}

To clarify the results of the theoretical analysis presented in section 3, we present simulations of different cases of the model drawn from available data (summarized in Table~\ref{tab:paramvals}).  To elucidate the effects of the interspecies transmission parameters$-p_d$, $p_h$, and $\mu-$we simulate cases where the pathogen fails to establish itself in wildlife, in domestic animals, and in both populations, showing that the human population will still suffer an endemic disease even if animal populations remain relatively unaffected by a brief epidemic.

\subsection{Examples}

We first simulate a zoonosis that establishes endemic equilibria in each host species, using the baseline parameters with $5\beta_w=5\beta_d$ to ensure the spread of the pathogen.

\begin{figure}[!ht]
    \centering
    \includegraphics[scale=0.3]{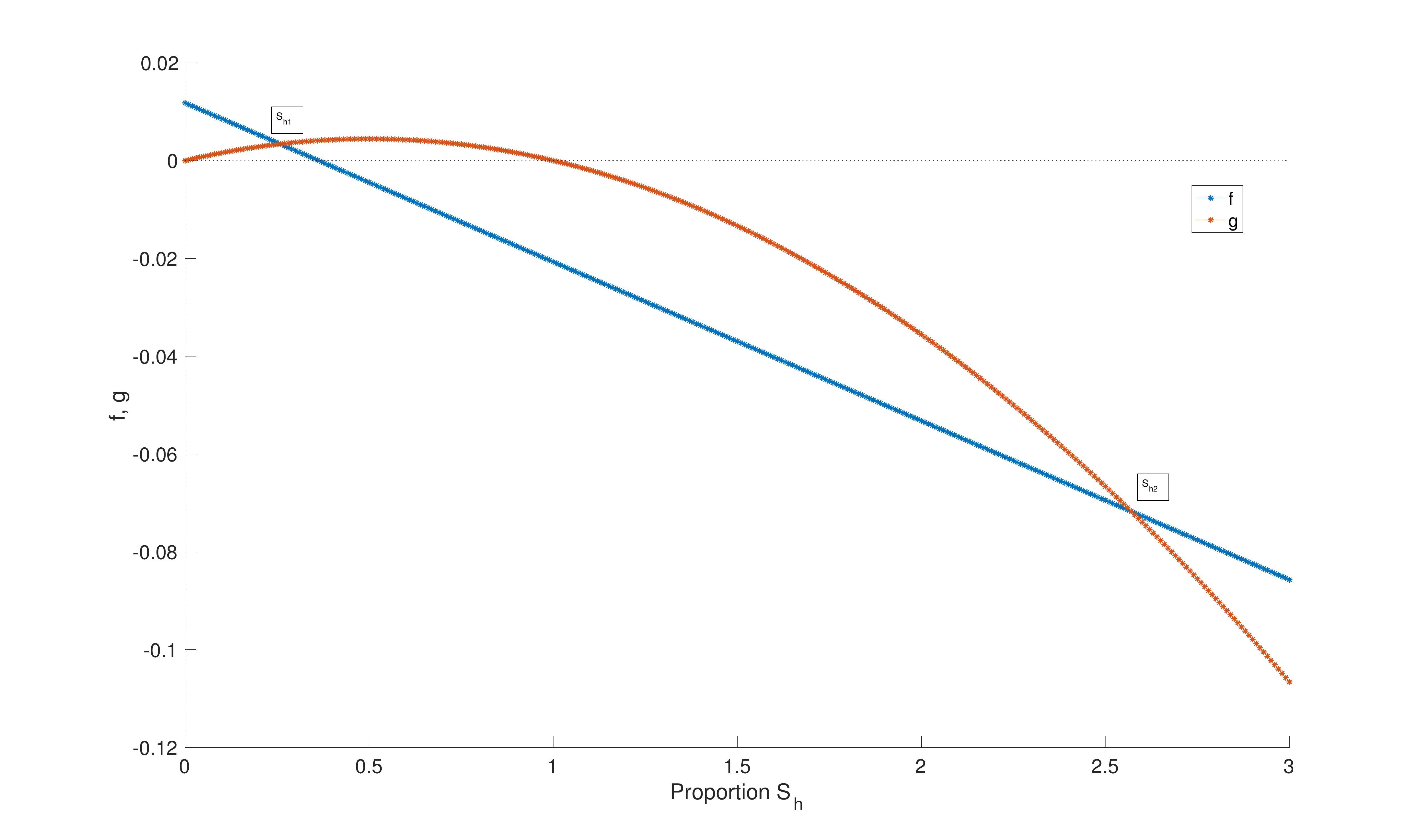}
    \caption{A simulation showing endemic equilibria in each species. Parameter values are as shown in Table~\ref{tab:paramvals}, with $\beta_w$ and $\beta_d$ multiplied by 5 to ensure spread in each compartment.}
    \label{fig:endemic}
\end{figure}

The outbreak shown in figure~\ref{fig:endemic} infects a maximum of 46.33\% of the human population and stabilizes at 19.04\% of the population infected, reaching equilibria in all three species by 150 units of time.

Next, to elucidate the effect of the mutation, we simulate an outbreak that fails to establish itself in the wild population (in this case, this species does not function as a reservoir host). 

\begin{figure}[!ht]
    \centering
    \includegraphics[scale=0.3]{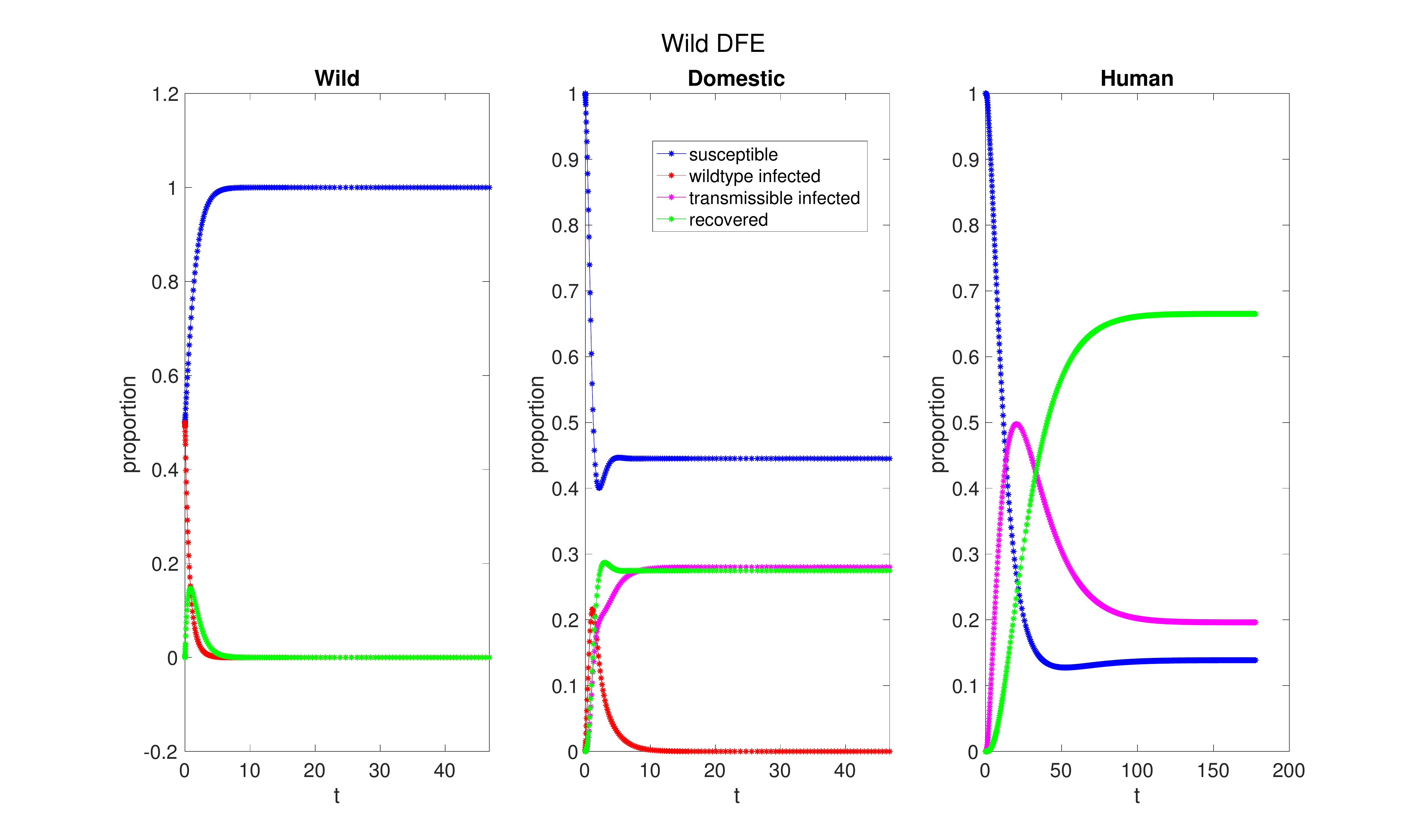}
    \caption{A simulation showing a disease-free equilibrium in the wild reservoir host species that spills over to endemic equilibria in the domestic intermediate host and humans. Parameters are as shown in Table~\ref{tab:paramvals}, except with $\beta_d$ multiplied by 5 to ensure an epidemic in the domestic compartment.}
    \label{fig:wildDFE}
\end{figure}

Figure~\ref{fig:wildDFE} shows that even if the disease fails to persist in its wild reservoir host, it can still become endemic in the human population. A maximum of 49.75\% of the human population was infected, with 19.62\% infected at equilibrium by time 150. This case illustrates that even if the epidemic fails to take hold among wild animals, it can still spread to domestic animals and thus humans, illustrating the importance of $p_d$ as a threshold parameter.

For our final example, we simulate avian influenza mutating from a low-pathogenic to a highly-pathogenic strain in an intermediate host. One of the best-known examples of a zoonosis with an intermediate host, avian influenza spreads from wild birds to domestic poultry to humans, a process for which there is some publicly available data. Seeding the model with the parameters shown in Table~\ref{tab:paramvals} (and assuming that $\beta_w = \beta_d, \gamma_w=\gamma_d$), we obtain the result shown in figure~\ref{fig:AI}.

\begin{figure}[!ht]
    \centering
    \includegraphics[scale=0.3]{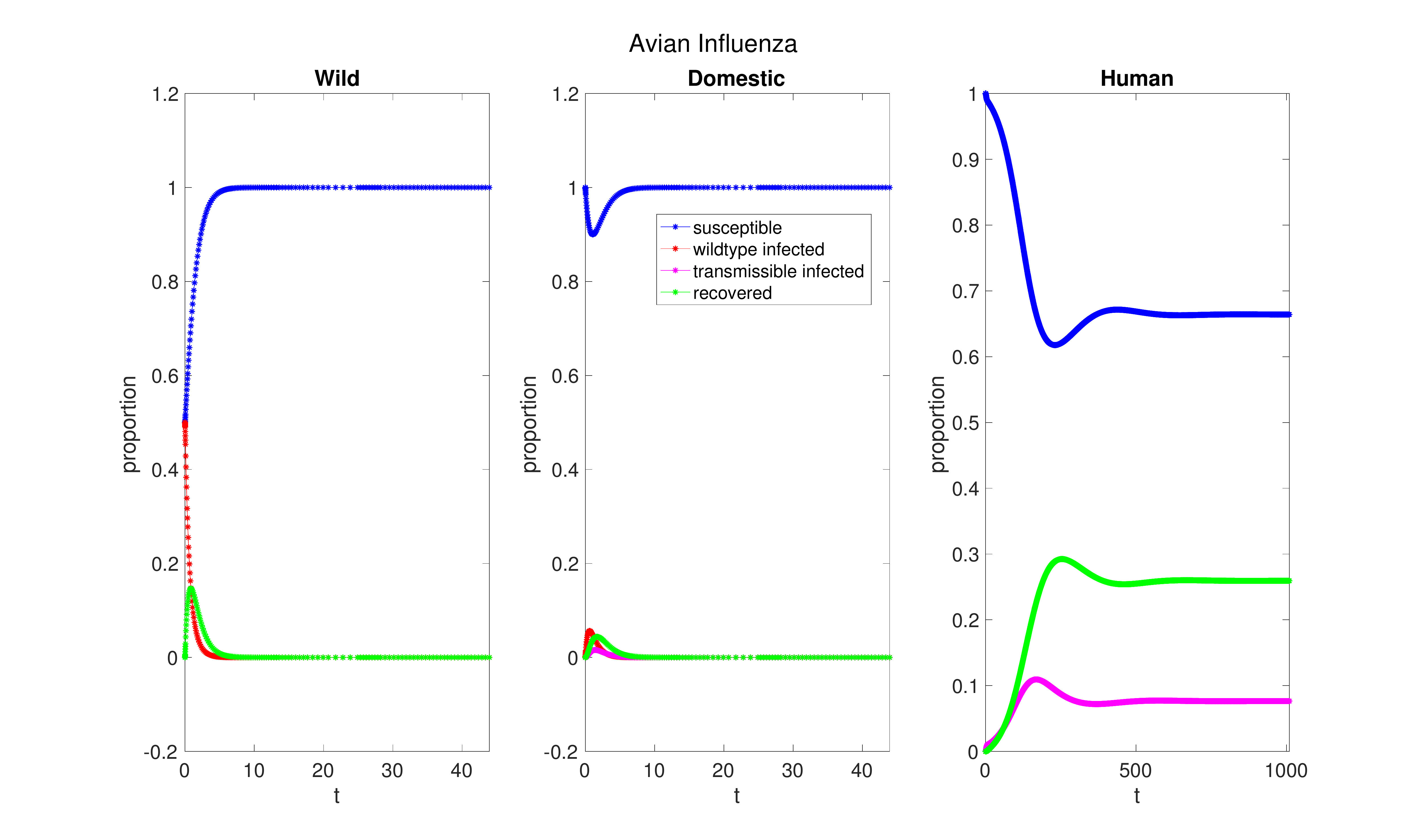}
    \caption{A simulation of low-pathogenic avian influenza mutating to high-pathenogenic avian influenza. Parameters are as shown in Table~\ref{tab:paramvals}.}
    \label{fig:AI}
\end{figure}

This example$-$which uses the most data publicly available$-$shows that even if a pathogen's $R_0$ is less than one in both wild and intermediate hosts, it can still establish itself in the human population. Here, both strains of avian influenza fade in the animal populations while establishing an endemic equilibrium in the human population, with a maximum of 10.94\% and an equilibrium of 7.65\% of the population infected over a time span an order of magnitude larger than that necessary in the previous examples ($t = 2000$, not shown in the figure). This result indicates that the unexpected behavior described in section 3 and modeled above does appear in real epidemics.

The results here are summarized in Table~\ref{tab:results}. The equilibrium proportion of infected humans is highest in row 2 because there are more domestic animals infected with the transmissible strain when the force of infection with the wildtype strain vanishes over time.

\begin{table}[!ht]
    \centering
    \begin{tabular}{|c | c | c| c |}
\hline
Situation & Max $I_h$ & Equilibrium $I_h$ & Time to Equilibrium \\
\hline
    Endemic in all species & 46.33\% & 19.04\% & $10^2$ \\
    Disease-free in wildlife & 49.75\% & 19.62\% & $10^2$ \\
    Avian influenza & 10.94\% & 7.65\% & $10^3$\\
\hline
\end{tabular}
    \caption{A comparison of the maximal and equilibrium values for the percentage of infected humans for each representative strain.}
    \label{tab:results}
\end{table}

These simulations illustrate that with nonzero transmission parameters, an initial infection in an upstream host species will spread to an endemic equilibrium in downstream ones even if the pathogen fails to establish itself in its animal hosts.

\subsection{Effects of Interspecies Transmission Parameters}

    In this section, we evaluate the effect of varying the interspecies transmission parameters $p_d$, $\mu$, and $p_h$ on the equilibrium values $I_d^*$, $T_d^*$, and $I_h^*$ after 3000 units of time, in addition to $\beta_d$ and $\beta_h$ for comparison. To produce the graphs shown below, we vary the parameter in question from 0.01 to 5 (since values of 0, as shown in section 3, inevitably lead to a disease-free equilibrium in the human compartment), with a step size of 0.1, holding the other values constant at the endemic equilibrium parameters detailed in section 4.1. Each simulation is run for 3000 timesteps, to ensure that an equilibrium solution is reached. In the domestic compartment, varying the transmission parameters $p_d$ and $\mu$ can change the relative prevalence of the wildtype and human-transmissible strains, as shown in Figures~\ref{fig:IT_vs_pd} and ~\ref{fig:IT_vs_mu}. (We do not examine the effect varying $p_h$ has on the domestic compartment because that parameter does not appear in the equations governing its behavior.)

    \begin{figure}[!ht]
    \centering
    \includegraphics[scale=0.25]{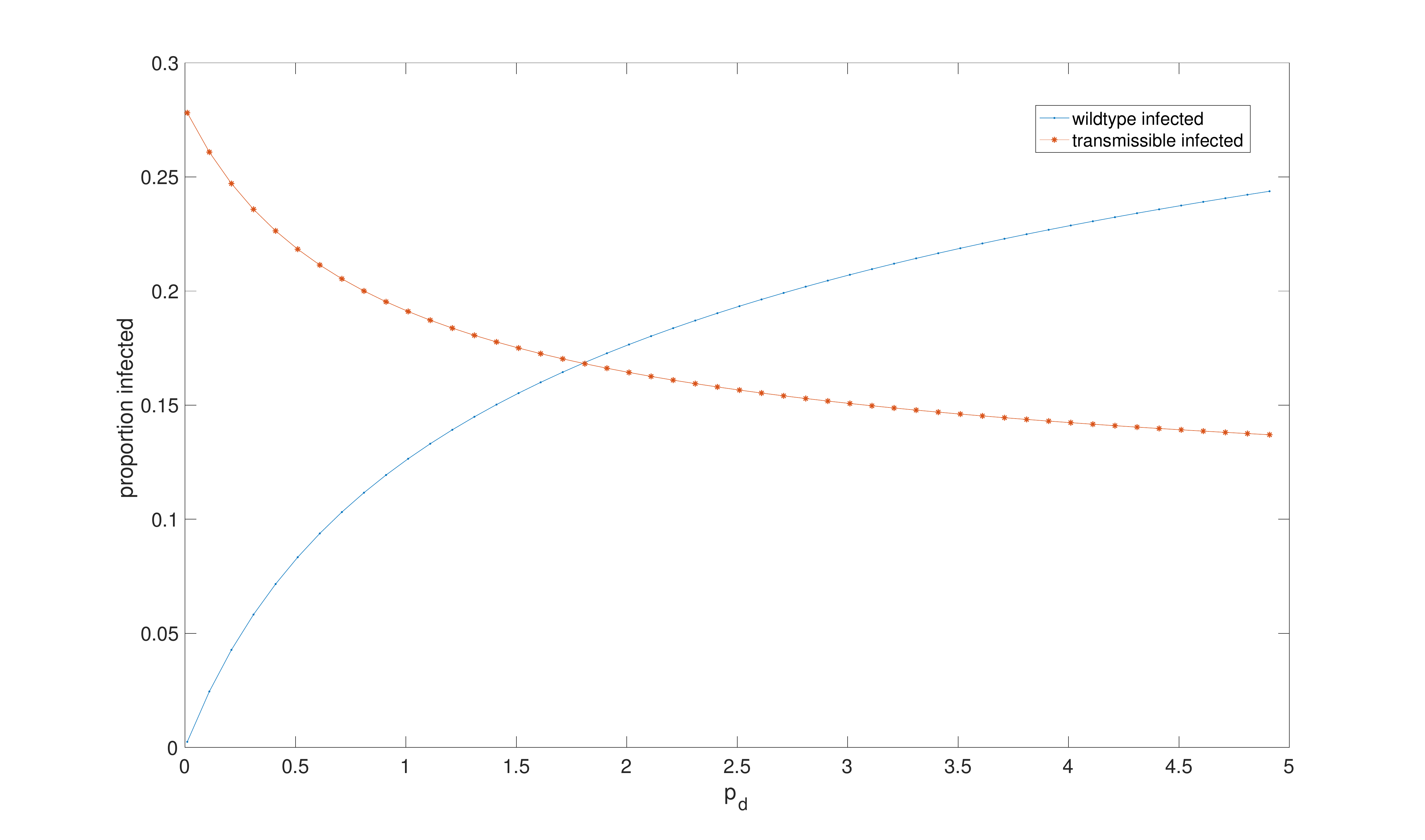}
    \caption{Graphing the proportion of domestic animals infected with the wildtype strain and the human-transmissible strain against $p_d$, the contact rate (spillover rate) between wild animals and domestic ones.}
    \label{fig:IT_vs_pd}
    \end{figure}
    
    \begin{figure}
    \centering
    \includegraphics[scale=0.25]{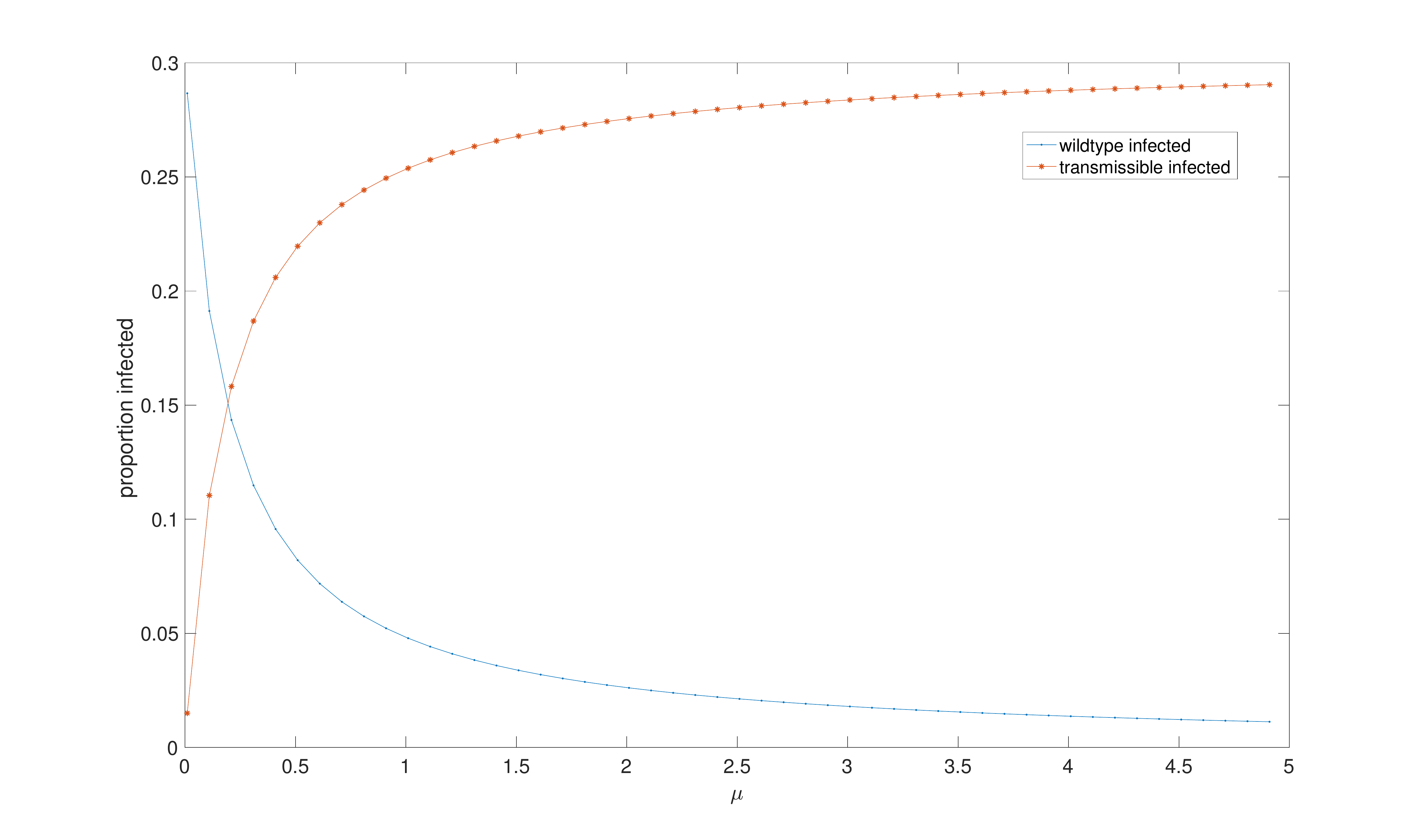}
    \caption{Graphing the proportion of domestic animals infected with the wildtype strain and the human-transmissible strain against $\mu$, the rate of mutation from the wildtype strain to the human-transmissible strain.}
    \label{fig:IT_vs_mu}
    \end{figure}
    
    \begin{figure}
    \centering
    \includegraphics[scale=0.25]{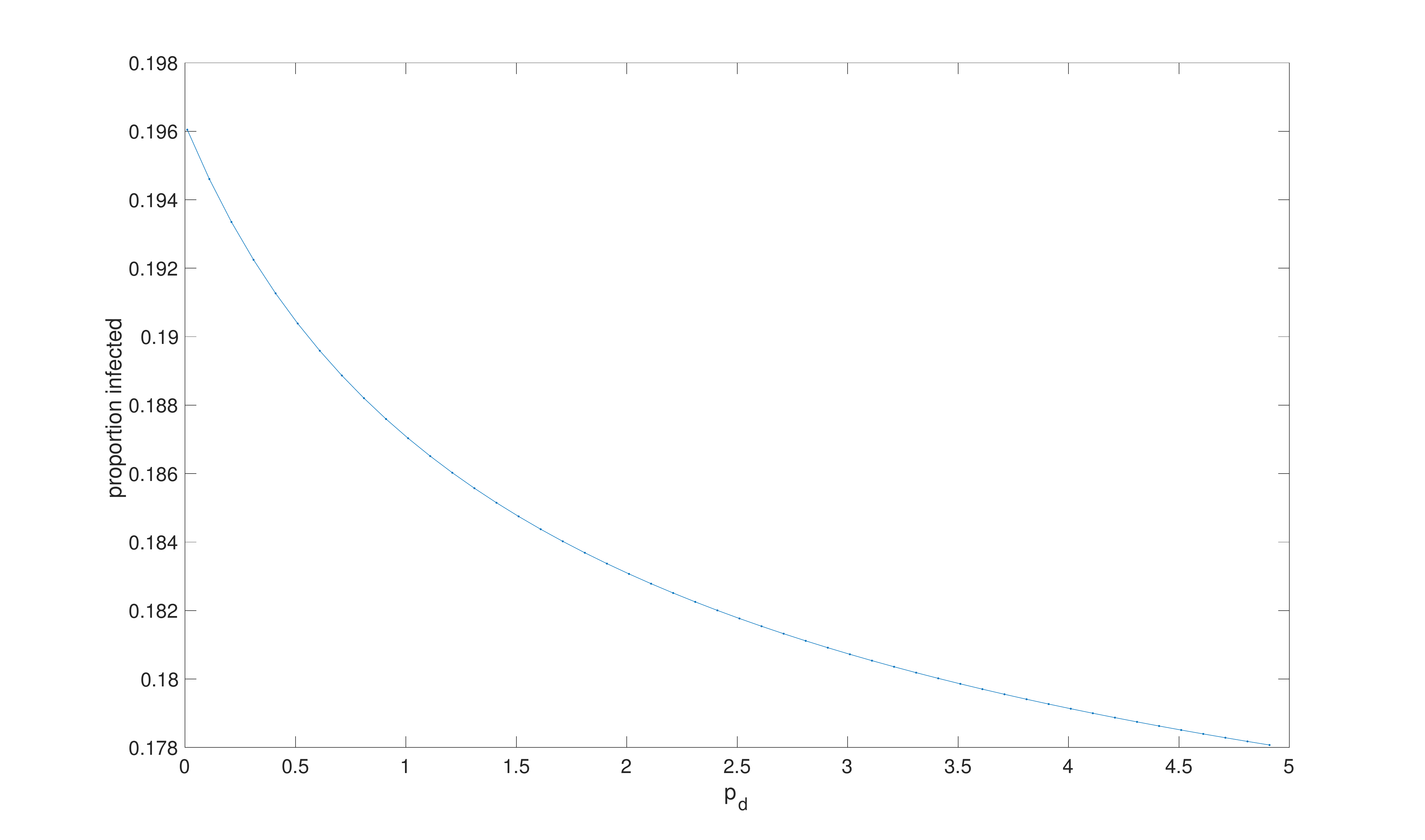}
    \caption{Graphing the proportion of infected humans against $p_d$.}
    \label{fig:Ih_vs_pd}
    \end{figure}
    
    Similarly, we vary $p_d$, $\mu$, and $p_h$ to examine the effect of these parameters on the proportion of infected humans, finding that while increasing the mutation and intermediate host-human contact rate increases this proportion, increasing $p_d$ lowers it (see Figures~\ref{fig:Ih_vs_pd},~\ref{fig:Ih_vs_mu}, and ~\ref{fig:Ih_vs_ph}), as a larger contact rate between wild and domestic animals leads to a larger proportion of animals infected with the non-human-transmissible strain and thus unable to pass the disease to humans. Further, for comparison, we vary $\beta_h$ from 0 to 5 using the same step length of 0.01. As shown in Figure ~\ref{fig:Ih_vs_bh}, while increasing $\beta_h$ can effect the proportion of infected humans, even decreasing $\beta_h$ to 0 still leads to an endemic equilibrium, with $I_h^*>0$.
    
    \begin{figure}[!ht]
    \centering
    \includegraphics[scale=0.25]{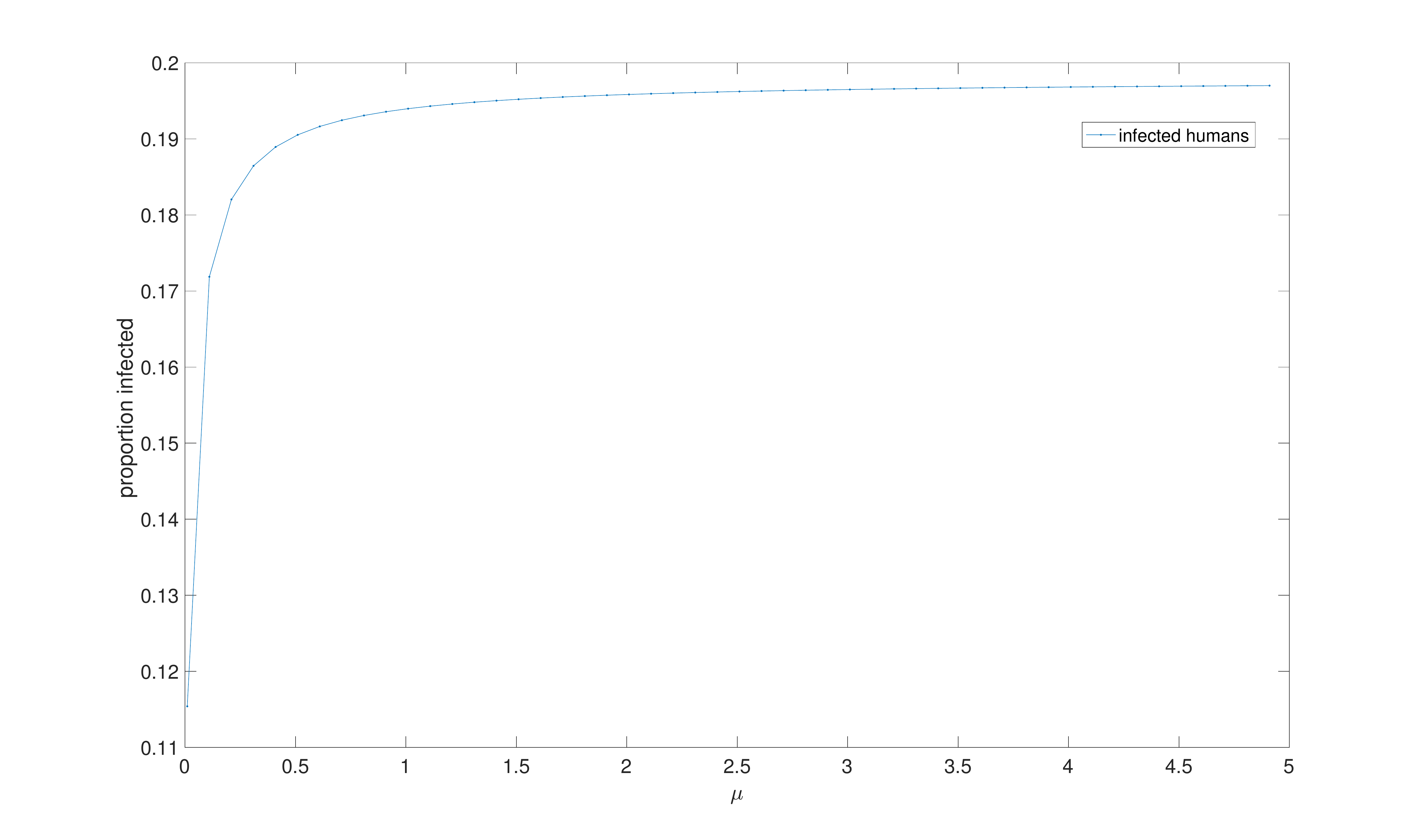}
    \caption{Graphing the proportion of infected humans against $\mu$, the rate of mutation from the wildtype strain to the human-transmissible strain.}
    \label{fig:Ih_vs_mu}
    \end{figure}
    
    \begin{figure}
    \centering
    \includegraphics[scale=0.25]{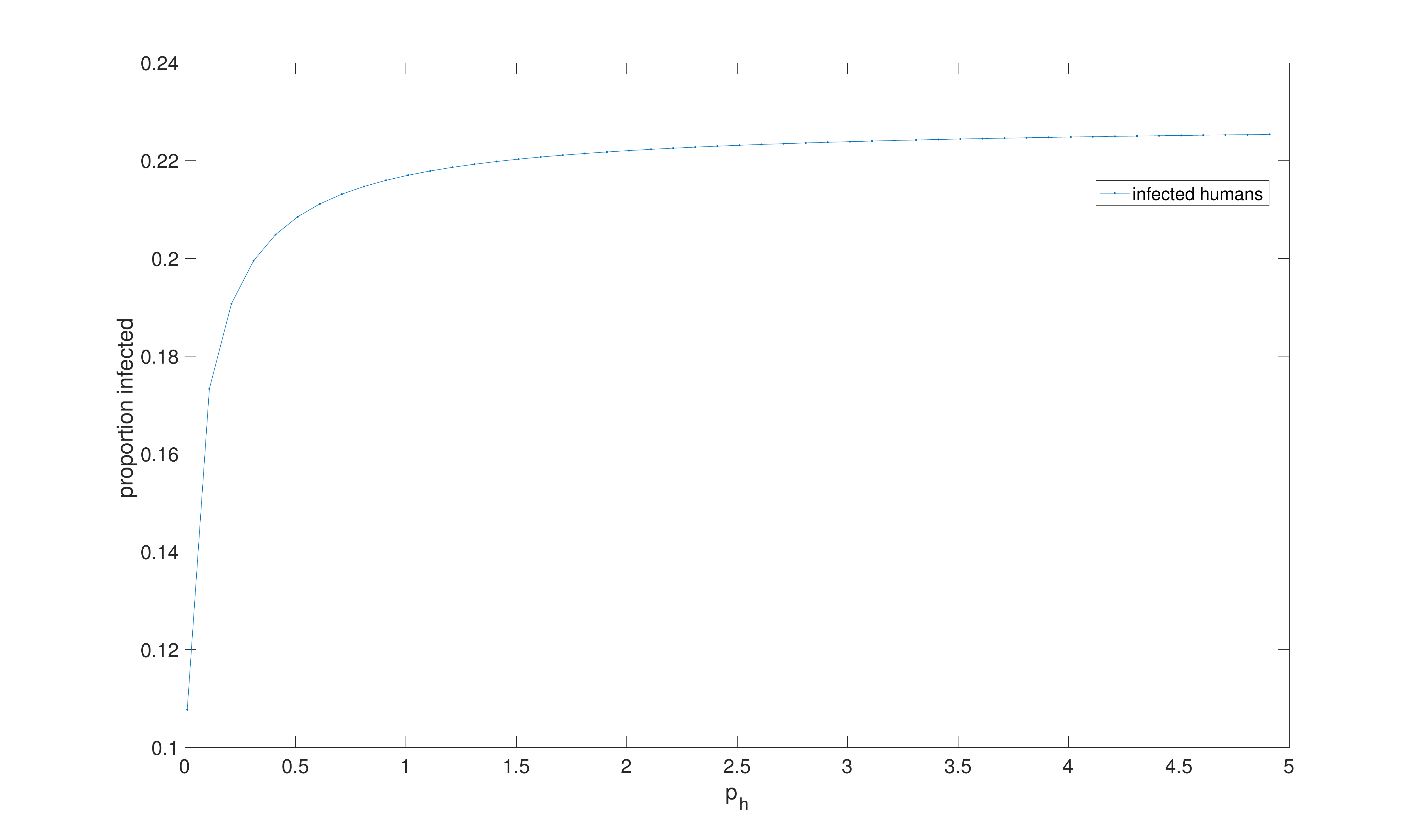}
    \caption{Graphing the proportion of infected humans against $p_h$, the contact rate (spillover rate) between domestic animals and humans.}
    \label{fig:Ih_vs_ph}
    \end{figure}
    
    \begin{figure}
    \centering
    \includegraphics[scale=0.25]{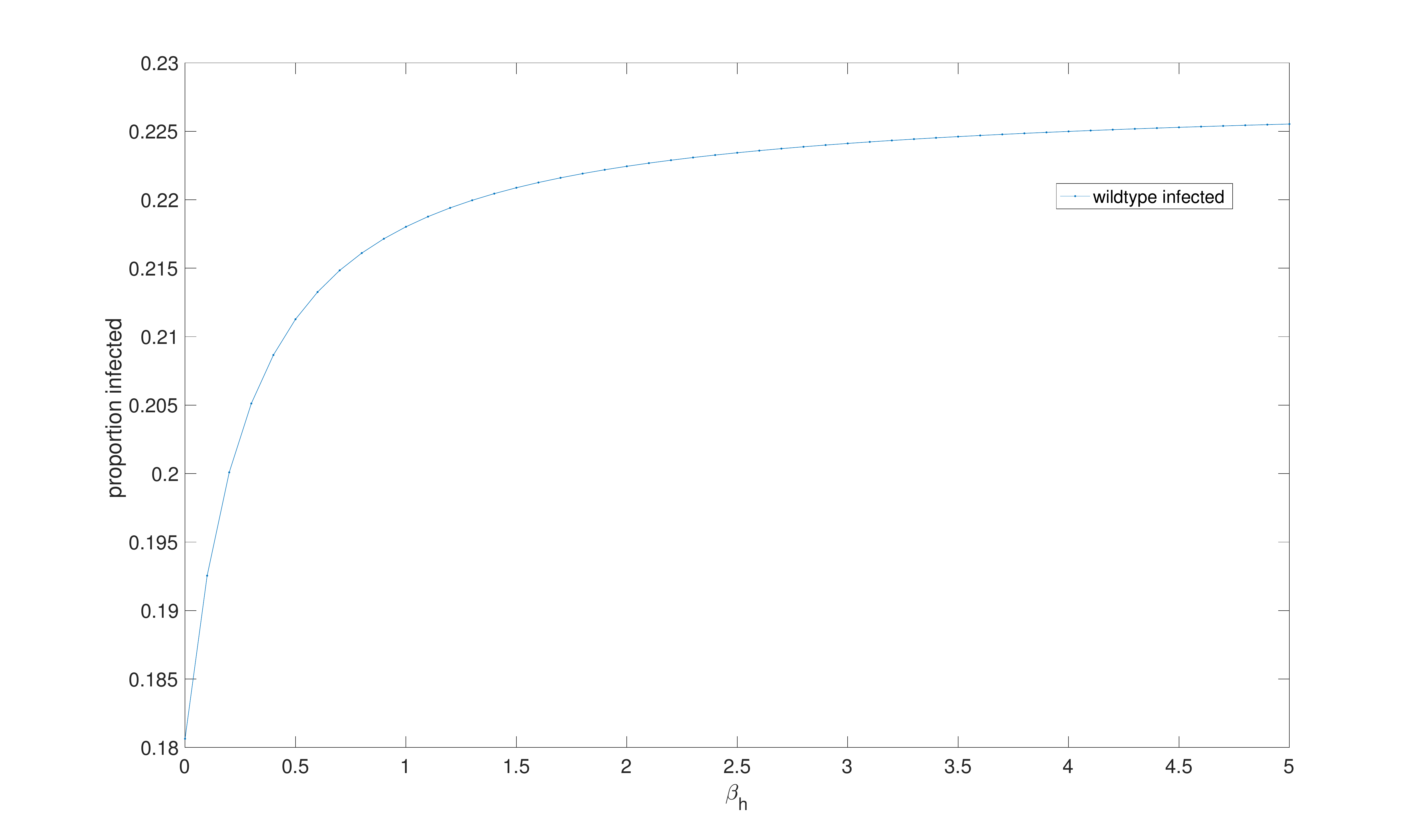}
    \caption{Graphing the proportion of infected humans against $\beta_h$, the transmission rate among humans. Here, setting $\beta_h$ to 0 still gives rise to an endemic equilibrium of infected humans.}
    \label{fig:Ih_vs_bh}
    \end{figure}
    
    \begin{figure}[!ht]
    \centering
    \includegraphics[scale=0.25]{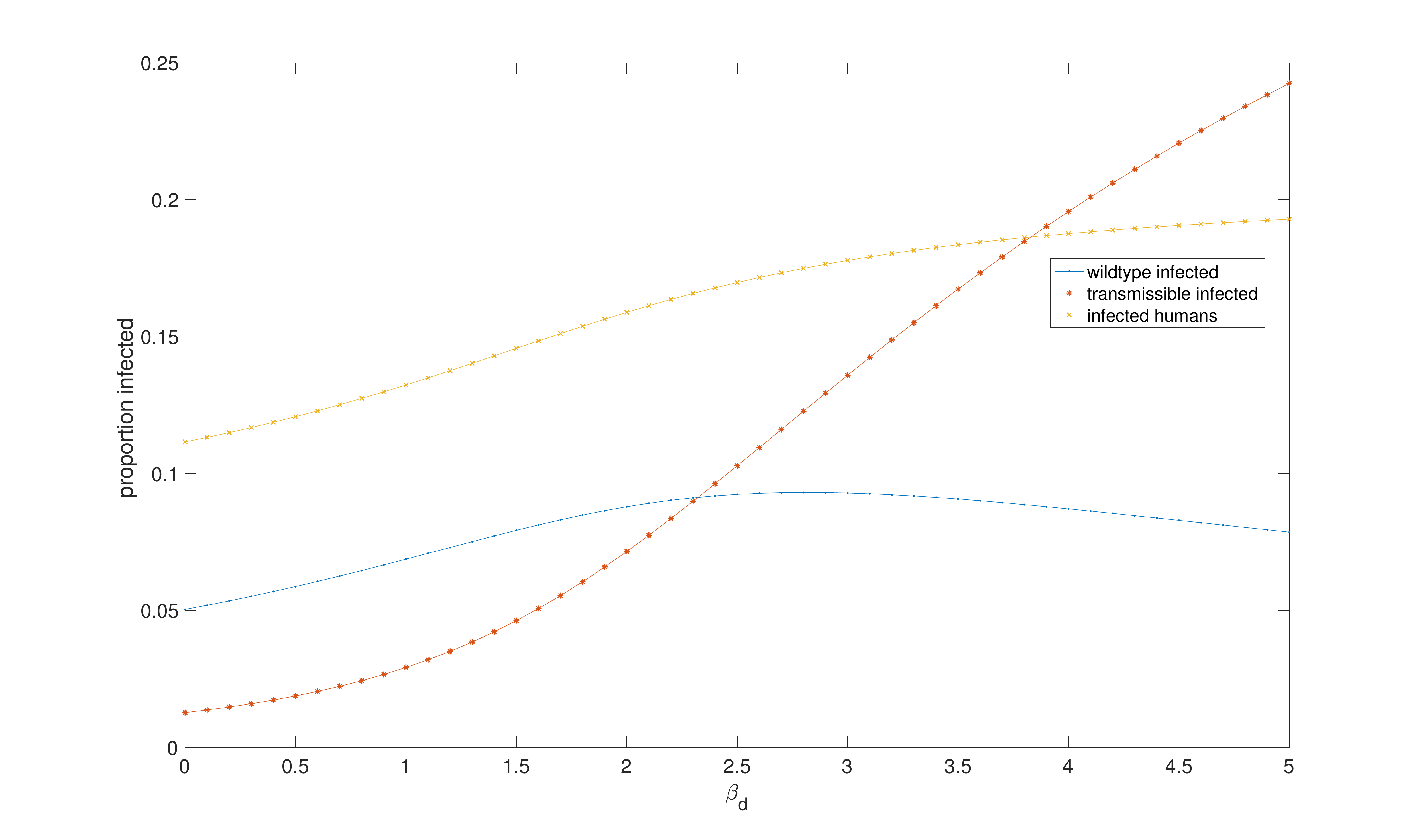}
    \caption{Graphing the proportion of infected humans and domestic animals against $\beta_d$, the transmission rate among domestic animals. Here, setting $\beta_d$ to 0 still gives rise to an endemic equilibrium of infected humans.}
    \label{fig:ITI_vs_bd}
    \end{figure}

    The importance of the interspecies transmission parameters is reflected in Figures \ref{fig:Ih_vs_bh} and \ref{fig:ITI_vs_bd}, which show that even when the transmission rates of the pathogen in humans or domestic animals is set to 0, the disease can reach an endemic equilibrium in humans. The effect of setting each parameter to in an otherwise endemic equilibrium, where the epidemic is expected to remain endemic in all three species, as in Figure \ref{fig:endemic}, is summarized in Table \ref{tab:paramcompare}.
    
    \begin{table}[!ht]
    \centering
    \begin{small}
    \begin{tabular}{| c | c | c | c|}
    \hline
    Parameter & Maximum \% of Infected Humans & Equilibrium \% of Infected Humans \\
    \hline
    -- & 46.33 & 19.04 \\
    $p_d$ & 0 & 0 \\
    $\mu$ & 0 & 0 \\
    $p_h$ & 0 & 0 \\
    $\beta_w$ & 49.77 & 19.62 \\
    $\beta_d$ & 18.81 & 11.16 \\
    $\beta_h$ & 36.13 & 18.06 \\
    \hline
    \end{tabular}
        \end{small}
    \caption{Comparing the effect of setting each transmission parameter to 0 in the endemic equilibrium of Figure \ref{fig:endemic}.}
    \label{tab:paramcompare}

    \end{table}
    
    These comparisons suggest that a lower number of animals infected with the transmissible strain has the potential to lower the proportion of infected humans, while even if the intraspecies parameters $\beta_d$ or $\beta_h$ are set to 0 the epidemic can spread to infect the human population. These results show that the interspecies transmission parameters are primary targets for intervention to lower the proportion of infected humans in this model.
    
\subsection{Summary}

To test the result from section 3 that changing $\mu$, $p_h$, and $p_d$ matters more to the eventual number of infected humans than changing $\beta_i$ or $\gamma_i$, the traditional parameters targeted in public health interventions, we varied the parameters $p_d, \mu, p_h, \beta_d,$ and $\beta_h$ while holding the other values constant at an endemic equilibrium condition. The results of these numerical simulations show that varying $p_d$ and $\mu$ can change the relative prevalence of domestic animals infected with the wildtype and human-transmissible strains, which in turn can change the proportion of infected humans. Further, only by setting one or more interspecies transmission parameters $\mu, p_d, p_h$ to 0 can the model avoid an endemic equilibrium in humans. In particular, the pathogen can persist in humans even if $\beta_h = 0$.

While varying traditional epidemic parameters such as $\beta_i$ and $\gamma_i$ can change the relative numbers of individuals in each compartment, section 3 shows that only $p_d$, $p_h$, and $\mu$ control the global behavior of a zoonotic epidemic, a result detailed by the simulations in this section. These results show that a zoonotic pathogen can establish itself in the human population as long as it is seeded with an initial infection in the wild compartment and $p_d, p_h$ and $\mu$ are nonzero, even if the human-transmissible strain is incapable of being transmitted between humans.

\section{Discussion} \label{ch:discussion}

The results of the mathematical analysis in section 3 suggest that we can categorize the parameters of the model into two types. The first type is intracompartment parameters: the transmission and recovery rates $\beta_i$ and $\gamma_i$, which describe interactions in a single species. The second is intercompartment parameters, which govern interactions between members of two species. $p_d$ and $p_h$, which indicate the spillover rate to domestic animals and humans, obviously fall into this category; $\mu$ quantifies the rate of a mutation arising in domestic animals that makes the pathogen transmissible among humans, and is thus also included. From Theorems 2 and 7, we see that it is only these second parameters, and the initial proportion of infected wild animals, that have the potential to alter the global dynamics of the three-species system to a disease-free equilibrium. The examples in section 4 crystallize the result that parameters of the second type are threshold values for the global progression of an epidemic: changing values in the first category only changes the relative proportions of each type of individual present at an equilibrium, not the stability of the equilibria, while changing the values of parameters in the second category can change the global behavior of the pathogen.

This complete simulation of an emerging zoonosis shows that even in cases where the disease dies out in the wild compartment and would fail without an external force of infection in the domestic one, it can establish an endemic equilibrium in humans. Further, this result holds even if $\beta_h = 0$, reflecting a pathogen in Stage 1 of the traditional categorization for zoonoses that would not be deemed a pandemic threat under that framework. While the high endemic prevalence seen in Figure \ref{tab:paramcompare} may be due to an overestimate of the pathogen's transmissibility in humans$-$the model assumes that all humans have an equally high exposure to poultry, while in reality agricultural workers are the group most at risk$-$these simulations suggest that the threat posed by zoonoses is more severe than previously assumed. Only by entirely suppressing at least one of the transmission parameters, an extraordinarily difficult feat, can public health officials prevent a pathogen with an intermediate host from establishing a presence in humans. This result indicates that even the slightest possibility of contact between species or selection for a pathogen more suited to humans raises $p_d, p_h$, or $\mu$ above 0 and thus can lead to an endemic infection in humans. While this endemic equilibrium or rates of transmission may be negligible in real populations, our results that the threat of an emerging zoonosis cannot be completely erased even with extraordinarily effective public health and medical interventions, confirming the focus on prioritizing zoonoses as mathematically sound and offering a warning for public health officials.

\subsection{Future Research}

The lack of large, publicly available data sets, especially regarding the prevalence of zoonotic infections in wild and domestic animals and the values for $p_d$, $p_h$, and $\mu$, limits our ability to refine any model
\citep{allen2012mathematical}. While some research attempts an explicit response to the lack of such by assessing expert approximations \citep{singh2018assessing}, this type of analysis cannot replace population-level data. Gathering such data is thus critical to future modeling efforts in domestic and wild animal populations (\citep{lloyd2015nine}, \citep{lloyd2009epidemic})$-$in particular, there is little data available for any infectious diseases in wild animals and interspecies contact rates$-$and should form a key component of future efforts.

This research introduces a model capable of replicating all stages of the emergence of a zoonosis with an intermediate host. To keep this work at a preliminary level and to maximize its use in more specialized contexts, we have not considered further modifications to the SIR prototype model such as loss of immunity (SIRS) or exposure time (SEIR), or possible variation patterns in the number of infected reservoir hosts, such as seasonal migration. Given adequate data, future research could add any of these modifications, as well as others not considered here, and can thus adapt this model to any specific emerging zoonosis. More specifically, future models should incorporate backwards transmission to wild animals, direct interactions between humans and wild reservoirs, as well as interactions between different pathogens in an intermediate host
\citep{lloyd2009epidemic}. Finally, since not all humans have the same level of exposure to a given intermediate host species, a more refined model could relax the assumption of mass action in the human compartment. The modifications discussed above have the potential to introduce more exciting dynamics, such as backward bifurcations or strange attractors in the solution space \citep{barrientos2017chaotic}, a type of behavior that could have implications for the policies governing zoonosis interventions. 

In particular, future models should investigate the effect of different transmission rates for the two strains circulating in the intermediate host, which will change the endemic equilibrium in domestic animals and thus humans. Here, we have abstracted the process of mutation to a binary question regarding human transmissibility, neglecting the distinction between the different possible ways for a pathogen to mutate and the different possible degrees of change. The mutation rate of a zoonotic disease can depend on social factors such as culling in the intermediate host population, vaccination of infected individuals, and biosecurity, as well as biological ones such as RNA mutation or interstrain competition \citep{goodwin2012interdisciplinary}, and future research should investigate whether those different processes have noticeable differences on the number of infected humans or the mathematical structure of the model. There is also a lack of investigation of disease dynamics in individual hosts, with little data investigating the effect of different expressions of pathogen genotypes or animal superspreaders (individuals who infect many more secondary cases than average) on transmissibility in humans \citep{lloyd2009epidemic}. As this effect is the one abstracted by our parameter $\mu$, delving deeper into individual-host pathogen dynamics such as cellular entry and replication \citep{allen2012mathematical} has the potential to improve our model. No emerging infected disease has been predicted before infecting humans \citep{morse2012prediction}, although progress is being made on identifying disease `hotspots' \citep{daszak2012anatomy}, and this inability reinforces the importance of studying the factors that lead to successful spillover and define transmission rates between species \citep{morse2012prediction}. 

This research suggests future avenues of exploration for both researchers and policymakers seeking to understand and control the spread of an emerging infectious zoonosis, and proves that interspecies connections are critical to controlling and understanding the effect an emerging zoonosis can have on human populations. We show that with nonzero transmission parameters and an initial population of infected wild animals, a pathogen can fail to achieve traditional markers of success, such as stage 3 transmissibility, and still maintain an endemic equilibrium in the human population. This is a concerning result for public health, but offers areas in which policy rather than medical interventions can be more effective in controlling disease.

\section{Conclusion} \label{ch:conclusion}

With the ability to study the emergence of a zoonosis with an intermediate host, first quantified by the model introduced here, scientists and policymakers alike have a more refined tool with which to study and confront one of the most well-recognized threats to global health: the emergence of a new pandemic into the human population. To our knowledge, this is the first model that accounts for the entire course$-$from infected wild animals, through mutation in an intermediate host, to an endemic equilibrium in humans$-$of the type of zoonotic pathogen the World Health Organization ranks in the highest tier of priorities for research and development, and so provides a significant step forward in its study.

We establish that the model has one unique disease-free equilibrium and one endemic equilibrium, and that the stability of these points depends on $p_d, p_h$, and $\mu$, the contact rates between species and the pathogen's rate of mutation. Accurately identifying and describing the dynamics of a pathogen circulating in wild and domestic animals provides an invaluable opportunity to avoid risk to humans \citep{morse2012prediction}, and can be used to guide public health interventions for emerging zoonotic diseases.

That the interspecies transmission parameters are the only threshold conditions for this model suggests that the problem of controlling the spread of a zoonotic epidemic has less to do with intracompartment controls than with intercompartment ones: rather than efforts to control the transmission or recovery rates in one species, it is a more effective intervention to control $p_d, p_h,$ or $\mu$ through better biosecurity or population control. This finding provides a blueprint for public health interventions in zoonoses, as well as a warning for officials hoping to prevent the spread of wildlife diseases to humans. The interspecies parameters$-p_d, p_h$, and $\mu-$may be more susceptible to policy changes than the intraspecies parameters $\beta_i$ and $\gamma_i$, at least when the domestic intermediate host is a livestock or pet species entirely under human control \citep{cunningham2017one}. Even before a zoonotic epidemic is detected in other species, restructuring agricultural systems and controlling livestock movements offer public health policymakers avenues to mitigate the effects of such a pathogen. For example, by preventing disease circulation on farms, we can prevent pathogens such as avian influenza from becoming persistent human health risks (\cite{karesh2012ecology}, \cite{morse2012prediction}). Since accurate models can assist in appropriately allocating surveillance resources \citep{lloyd2015nine}, these parameters can thus guide health officials in their response to and prevention of emerging zoonoses, policy changes which are essential in mitigating the risks of such diseases \citep{cunningham2017one}.

Our results primarily offer a warning to public health officials: without drastic interventions to drive interspecies interactions or pathogen mutation rates to 0, which may be biologically impossible, zoonoses with the capacity to mutate in a human-adjacent intermediate host can spread to humans even if they are not viable in a human population alone. More fundamentally to the field of mathematical epidemiology, this result confirms previously held beliefs$-$unquantified until now$-$about the philosophical importance of zoonoses to humanity. It is a pillar of the movement variously called ``global'', ``one'', or ``planetary health'' that human populations cannot isolate themselves from changes that affect other species with interventions targeting only humans. By mathematically linking the progress of a zoonotic epidemic to parameters governing interactions between species, this model shows that the framework of an interconnected human and natural world that implicitly underlies much of the analysis in this field in the last twenty years agrees with the mathematics of infectious disease, quantifying and confirming a widespread belief in global health.

\section*{Acknowledgments}

We are grateful for generous support from the Neukom Institute at Dartmouth, the Neukom CompX Faculty Grant, Walter \& Constance Burke Research Initiation Award, and NIH COBRE Award. F.F. acknowledges a Junior Faculty Fellowship funded by the Dartmouth's Dean of the Faculty for this work.

\end{document}